\documentclass[twocolumn,amsthm]{autart}    
\usepackage{cite}
\usepackage{amsmath,amssymb,amsfonts}
\usepackage{algorithmic}
\usepackage{graphicx}
\usepackage{algorithm,algorithmic}
\usepackage{hyperref}
\hypersetup{hidelinks=true}
\usepackage{textcomp}
\usepackage{bm}
\usepackage{hyperref}
\usepackage{cuted}
\usepackage{amsthm}
\usepackage{mathtools}
\usepackage{tensor} 
\usepackage{pifont}
\usepackage{rotating}
\usepackage{transparent}
\usepackage{nicefrac}
\usepackage{xcolor}
\usepackage{xfrac}
\usepackage{layouts}
\usepackage{subcaption}

\DeclareInstance{xfrac}{mathdefault}{math}
{
	numerator-top-sep = 0pt,
	scale-factor = 1,
	scaling = true
}
\usepackage{nicematrix}
\usepackage{tikz}
\usepackage{tikz-cd}
\usepackage{enumitem}

\usetikzlibrary{positioning, arrows.meta, calc, shapes.geometric, backgrounds, fit, decorations.pathreplacing}

\usepackage{pgfplots}
\pgfplotsset{compat=1.18}
\usepgfplotslibrary{fillbetween}

\usepackage{stackrel}
\tikzcdset{every label/.append style = {font = \footnotesize}}

\usepackage{thmtools}
\usepackage{mathtools}
\usepackage{tensor} 

\usepackage{transparent}
\usepackage{color}

\usepackage{tikz}
\usepackage{tikz-cd}
\usepackage{booktabs}
\usepackage{caption}
\usepackage{subfiles}

\usepackage{listings}
\usepackage{framed}

\usepackage[most]{tcolorbox}

\renewcommand{\phi}{\varphi}

\usepackage{stackengine,scalerel,graphicx}
\stackMath

\newcommand{\pushright}[1]{\ifmeasuring@#1\else\omit\hfill$\displaystyle#1$\fi\ignorespaces}

\usepackage{tikz-cd}

\renewcommand{\theenumi}{\roman{enumi}}

\renewcommand{\theenumii}{\alph{enumii}}

\newcommand{\vu}{v_u}

\newcommand{\bb}{\bm{b}}

\newcommand{\AAA}{\bm{A}}

\newcommand{\stil}{\tilde{s}}

\newcommand{\xx}{\bm{x}}

\newcommand{\xb}{\bar{x}}

\newtheorem{problem}{Problem}

\newcommand{\s}{^*}

\usepackage{tabstackengine}
\usepackage{nicematrix}
\usepackage{soul}

\usepackage{pgfplots}
\pgfplotsset{compat=1.18}

\renewcommand{\leq}{\leqslant}
\renewcommand{\geq}{\geqslant}

\makeatletter
\renewenvironment{proof}[1][\proofname]{%
	\par\pushQED{\qed}\normalfont%
	\topsep6\p@\@plus6\p@\relax
	\trivlist\item[\hskip\labelsep\bfseries#1\@addpunct{.}]%
	\ignorespaces
}{%
	\popQED\endtrivlist\@endpefalse
}	
	
\makeatother

\theoremstyle{plain}
\newtheorem{theorem}{Theorem}
\newtheorem{corollary}{Corollary}
\newtheorem{lemma}{Lemma}
\newtheorem{proposition}{Proposition}

\theoremstyle{definition}
\newtheorem{definition}{Definition}

\makeatletter
\renewcommand{\@thebibliography}[1]{\@bibliosize
	\list{\@biblabel{\arabic{enumiv}}}{\settowidth\labelwidth{\@biblabel{#1}}
		\if@nameyear
		\labelwidth\z@ \labelsep\z@ \leftmargin\parindent
		\itemindent-\parindent
		\else
		\labelsep 3\p@ \itemindent\z@
		\leftmargin\labelwidth \advance\leftmargin\labelsep
		\fi
		\itemsep 0pt plus 0.1\@bls
		\usecounter{enumiv}\let\p@enumiv\@empty
		\def\theenumiv{\arabic{enumiv}}}%
	\tolerance\@M
	\hyphenpenalty\@M
	\hbadness5000 \sfcode`\.=1000\relax}
\makeatother

\begin{document}
	
	\begin{frontmatter}
		
		\title{
			Time-Optimal Control of One-Mode Flexible Structures: Analytical Solution and the Cost of Flexibility
%
%
%
%
%
		} 
		
		\thanks[footnoteinfo]{This paper was not presented at any IFAC 
			meeting. 
			}
		
		
		\author[dlr,tue]{Manuel Keppler}\ead{manuel.keppler@dlr.de}
		\address[dlr]{German Aerospace Center (DLR), Wessling, Germany\\[-1.5em]}
		\address[tue]{Eindhoven University of Technology (TU/e), Eindhoven, The Netherlands\\[-1.0em]}

		\begin{keyword}
%
			Time-optimal control, flexible structures, rest-to-rest maneuvers, 
			bang-bang control, analytical solution, two-mass-spring system, sensitivity analysis
		\end{keyword}                            
\begin{abstract}
The time-optimal solution for a double integrator is a foundational result: its time-displacement law $T_r= 2\sqrt{L}$ for a normalized step $L$ is structure that numerical methods cannot provide.
We solve the time-optimal control for rest-to-rest maneuvers of a double integrator coupled to a harmonic oscillator, deriving the first analytical solution and closed-form time-displacement law. Synthesis reduces to a single scalar inversion.
This is the model to which the two-mass-spring system, the one-bending-mode flexible structure, and the linearized overhead crane reduce.
A Pythagorean identity $T^2 = T_r^2 + 2T_s^2$ decomposes the optimal maneuver time into the
rigid-body minimum and a synchronization time $T_s \leq \pi$, the
cost of flexibility, giving the sharp envelope $2\sqrt{L} \leq T \leq
2\sqrt{L + \pi^2/2}$. The penalty is scale-dependent. For small
maneuvers $T \propto L^{1/4}$: one oscillator mode costs as much as two additional integrators. For large ones, flexibility is asymptotically free. It vanishes at the natural motions, where the oscillator completes whole cycles. A closed-form sensitivity formula proves the maneuver time non-monotone in stiffness. Adding flexibility to a rigid body never shortens a maneuver, yet softening an already flexible structure can: stiffer is not always faster. The natural motions are robust design targets since first-order sensitivity to stiffness vanishes there.

\end{abstract} 

	\end{frontmatter}

	\section{Introduction}\label{sec:intro}


The time-optimal analytical solution for the double integrator is
a foundational result of optimal control
theory~\cite{mcdonald1950,hopkin1951,bushaw1952}. Its lasting
value lies in the qualitative structure it reveals: bang-bang
controls with a single switch, and the time-displacement law
$T_{r} = 2\sqrt{L}$ relating maneuver time to step size, structure
that numerical methods cannot provide. Yet for the natural next
case, a double integrator coupled to a harmonic oscillator through
a shared bounded input, 
the switching times have not been available in closed form, and with them no time-displacement law.


The same fourth-order system underlies three classical benchmarks: the two-mass-spring system~\cite{wie1992}, the one-bending-mode flexible structure, and the linearized overhead crane (Fig.~\ref{fig:unification}). These arise across engineering domains: drive-train compliance in robotics~\cite{hirzinger2001,pratt1995} and precision mechatronics~\cite{altintas2011,hori1999}; flexible spacecraft with elastic appendages~\cite{williams1987,junkins1993} and proximity operations under the Hill--Clohessy--Wiltshire model~\cite{bevilacqua2010,sevier2024}; and linearized overhead cranes with suspended payloads~\cite{sakawa1982,auernig1987}. Each reduces, via state transformation and time scaling (App.~\ref{app:canonical}), to a common canonical model, the lowest-order system with both a rigid-body and a flexible mode.

Closed-form time-optimal solutions exist for the double and triple integrator~\cite{bushaw1952,lee1967,akulenko2000} and the harmonic oscillator~\cite{pontryagin1962,athans2006}, and recent work continues to extend the analytically tractable cases~\cite{he2020,wang2025} and to reduce
transfers between arbitrary states to the no-rest-to-rest problem~\cite{romano2020}. Yet the coupled fourth-order system has resisted analytical synthesis. The difficulty is structural: under Pontryagin's minimum principle, the oscillator's costates introduce trigonometric terms into the switching function, turning the boundary-value problem into a system of transcendental equations in the switching times. Recent work has relied on hybrid analytical-numerical methods~\cite{sevier2024} or has studied the maneuver's robustness under process noise~\cite{bhattacharjee2025}, leaving the exact rest-to-rest solution open.

Although structural properties of the optimal control are known, such as phase-space symmetries~\cite{barbieri1988}, odd symmetry of the optimal control~\cite{singh1989}, a three-switch upper bound on the bang-bang structure~\cite{pao1990,pao1996}, and a state-feedback switching-hypersurface
characterization~\cite{barbieri1993}, the exact mappings between step size, maneuver time, and switching times have remained elusive. Pao~\cite{pao1996} observed numerically that isolated step sizes admit a single switch; \cite{keppler2020} named these \emph{natural motions}, characterized them exactly, and constructed candidate optimal trajectories of the form derived here, without establishing their optimality.

More broadly, the rigid-body time-displacement law has stood for seventy years \cite{bushaw1952,athans2006} without a flexible counterpart. 
%
A heuristic small-step approximation exists for the flexible-spacecraft model~\cite{ben-asher1987}, but the exact penalty across arbitrary steps has remained unknown.
A fundamental question thus stands open, even at the simplest mechanical
model:
\emph{what is the exact cost of flexibility in time-optimal control?}

Answering it requires the analytical solution, long impeded by the transcendental boundary-value problem of Pontryagin's minimum principle. We obtain it by working in the phase plane. The main contributions are:


\smallskip
\begin{enumerate}
	\item We derive the first analytical solution for the 
	time-optimal control for rest-to-rest maneuvers
	of the canonical model of Fig.~\ref{fig:unification} (Theorem~\ref{thm:analytical_sol}), reducing synthesis to a single scalar inversion $\alpha = w^{-1}(L)$.
	
	\item We establish the Pythagorean identity
	$T^{2} = T_{r}^{2} + 2T_{s}^{2}$
	(Theorem~\ref{thm:pythagorean}), which decomposes the optimal time
	into the rigid-body minimum $T_{r} = 2\sqrt{L}$ and a
	synchronization cost $T_{s} \leqslant \pi$, a closed-form
	time-displacement law (Corollary~\ref{cor:kepler}), and the sharp
	envelope $2\sqrt{L} \leqslant T \leqslant 2\sqrt{L+\pi^{2}/2}$
	(Corollary~\ref{cor:bounds}).



	\item We prove the asymptotic scaling
	(Proposition~\ref{prop:scaling}): $T \propto L^{1/4}$ as
	$L \to 0$, so one oscillator mode penalizes short maneuvers as
	severely as two additional integrators in the chain
	($T/T_{r} \to \infty$). For large maneuvers, flexibility is
	asymptotically free ($T/T_{r} \to 1$).
	
	\item A closed-form sensitivity formula proves the
	maneuver time non-monotone in stiffness, with a minimum at each
	natural motion $L=(2n\pi)^{2}$, where the penalty vanishes and the
	rigid-body minimum is attained (Theorem~\ref{thm:sign_reversal},
	Corollary~\ref{cor:natural_anti}). So adding flexibility to a rigid body never shortens a maneuver, yet softening an already flexible structure can.
	Section~\ref{sec:design} translates these results into design guidelines in the native parameters of the three models in Fig.~\ref{fig:unification}.
%
%
\end{enumerate}		
		\section{Problem Formulation}
	\label{sec:problem}
	We consider the time-optimal control for rest-to-rest maneuvers of the fourth-order canonical model
	\begin{gather}\label{eq:x_dyn}
			\dot{\mathbf{x}}(\tau) = \mathbf{A}\mathbf{x}(\tau) + \mathbf{b} u(\tau), \quad |u(\tau)|\leq 1,
			\\[0.4em]
			\AAA=\renewcommand{\arraystretch}{1.0}
			\begin{bNiceMatrix}[columns-width=0.8em,margin=0.4em]
				0&1&0&0\\
				0&0&0&0\\
				0&0&0&1\\
				0&0&-1&0\\
			\end{bNiceMatrix},
			\quad
			\bb=
			\begin{bNiceMatrix}[margin]
				0\\1\\0\\1
			\end{bNiceMatrix}.
	\end{gather}
	%
%
	Here $\tau$, $u$, and $\mathbf{x}$ are dimensionless: time is
	scaled by the oscillator period, the input by the actuator
	bound, and the state by the resulting units, so that the
	frequency and input limit are unity.
	The three models of Fig.~\ref{fig:unification} reduce to this form (Appendix~\ref{app:canonical}). $\xx$ decomposes into a rigid-body pair
	$(x_1, x_2)$ and a harmonic-oscillator pair
	$(x_3, x_4)$, both driven by the common
	input. In the two-mass realization $x_1$ is the
	center of mass and $x_3$ the spring deflection 
	%
	\begin{equation*}
		\frac{d}{d\tau}
		\renewcommand{\arraystretch}{1.0}
		\begin{bNiceMatrix}[margin=0.1em]
			x_1 \\ x_2 
		\end{bNiceMatrix}
		= \renewcommand{\arraystretch}{1.0}
		\begin{bNiceMatrix}[margin=0.1em]
			x_2 \\ u 
		\end{bNiceMatrix},
		\qquad
		\frac{d}{d\tau}
		\renewcommand{\arraystretch}{1.0}
		\begin{bNiceMatrix}[margin=0.1em] 
			x_3 \\ x_4 \end{bNiceMatrix}
		= \renewcommand{\arraystretch}{1.0}
		\begin{bNiceMatrix}[margin=0.1em] 
			x_4 \\ -x_3 + u \end{bNiceMatrix}.
	\end{equation*}
	%

	\begin{problem}\label{prob:top}\label{prob:TOC}
		Find the control
		$u\colon [-T/2, T/2] \to [-1, 1]$
		that drives~\eqref{eq:x_dyn} from
		$\mathbf{x}(-T/2) = (\bar{x}_1, 0, 0, 0)^{\!\top}$,
		$\bar{x}_1 \in \mathbb{R}$, to
		$\mathbf{x}(T/2) = \mathbf{0}$ in minimum
		time~$T$.
	\end{problem}
	
	The magnitude $L := |\bar{x}_1|$ is the \emph{step size} and $T$ the \emph{maneuver time}. The centered interval $[-T/2, T/2]$ exposes the odd symmetry below.
	The system is controllable and normal~\cite{athans2006}, so the unique time-optimal control is bang-bang with finitely many switches.
	
	%
	\smallskip
	
	\begin{description}
		\item[\normalfont (P1)] Any time-optimal trajectory is
		symmetric about the line
		$x_1 = \bar{x}_1/2$ in the $x_1 x_2$-plane
		and about the $x_4$-axis in the
		$x_3 x_4$-plane~\cite{barbieri1988}.
		\item[\normalfont(P2)] The optimal control $u^*$ is odd:
		$u^*(\tau) = -u^*(-\tau)$ on
		$[-T^*/2, T^*/2]$, where $T^*$ is the minimal
		maneuver time~\cite{singh1989}.
		\item[\normalfont(P3)] The optimal control has at most
		three switches~\cite{pao1990}.
	\end{description}
	\smallskip
	By~(P1)--(P3), the optimal control is bang-bang with four intervals of durations $\{\alpha, \gamma, \gamma, \alpha\}$: a
	central \emph{synchronization phase} of
	duration~$T_s := 2\gamma \geqslant 0$ (which may vanish for
	specific step sizes~\cite{keppler2020}), flanked by two
	\emph{boost} bangs of duration~$\alpha$
	(Fig.~\ref{fig:admissible_control}).

	\begin{figure}
		\centering
		\footnotesize
		\def\svgwidth{1.0\linewidth}
		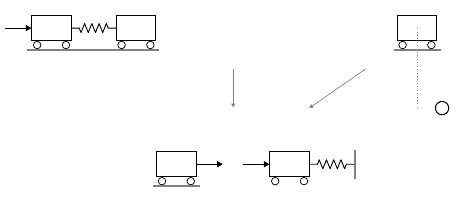
		\caption{(top) Three classical models (two-mass-spring, rigid body with flexible mode, overhead crane) reduce to a common canonical model (bottom): rigid body coupled to a harmonic oscillator through a shared input~$u$.}
		\label{fig:unification}
	\end{figure}
	\begin{figure}
		\centering
		\footnotesize
		\def\svgwidth{1.0\linewidth}
		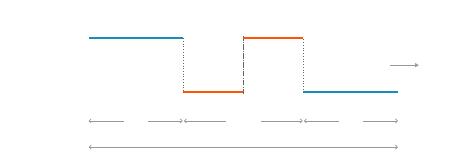
		\caption{Structure of the time-optimal control,	as implied by properties (P1)--(P3).}
		\label{fig:admissible_control}
	\end{figure}

\section{Phase-Space Geometry}\label{sec:geometry}
%

The state dynamics are block-diagonal, coupled only through the bang-bang input. The rigid body~(RB) follows parabolas in the $x_1 x_2$-plane, and the harmonic oscillator~(HO) clockwise circles in the $x_3 x_4$-plane. Switching times become parabolic heights and arc angles, the geometric language of the analysis that follows.


%

\subsection{Phase Portraits and Time-Geometry
	Connection}\label{sec:portraits}

For constant input $u(\tau) = \pm 1$, the solution
of~\eqref{eq:x_dyn} from an initial state
$\mathbf{x}(0) = (\bar x_1, \bar x_2,
\bar x_3, \bar x_4)^\top \in \mathbb{R}^4$ is
\begin{align}
	x_1(\tau) &= \pm\tfrac{1}{2}\tau^2
	+ \bar x_2\,\tau + \bar x_1,
	&
	x_2(\tau) &= \pm\tau + \bar x_2,
	\label{eq:rb-sol} \\
	x_3(\tau) &= C\,e^{i\tau}
	+ \bar C\,e^{-i\tau} \pm 1,
	&
	x_4(\tau) &= iC\,e^{i\tau}
	- i\bar C\,e^{-i\tau},
	\label{eq:ho-sol}
\end{align}
with $C \coloneqq \tfrac{1}{2}(\mp 1 + \bar x_3
- i\bar x_4)$
and $\bar C$ its complex conjugate. Throughout,
the upper (lower) sign corresponds to
$u \equiv +1$ ($u \equiv -1$). A solution segment
with $u \equiv +1$ is called a \emph{p-arc} and with
$u \equiv -1$ an
\emph{n-arc}
(terminology adapted from Bushaw~\cite{bushaw1952}).
Since the optimal control for Problem~1 is bang-bang with at
most three switches~\cite{pao1990}, every
minimum-time path is a concatenation of at most
four alternating p-~and n-arcs. A path beginning
with a p-arc (n-arc) is called a \emph{p-path}
(\emph{n-path}).


Eliminating~$\tau$ from~\eqref{eq:rb-sol} gives the
RB trajectories
\begin{equation*}
	x_1 = \pm\frac{x_2^2}{2}
	\mp \frac{\bar x_2^2}{2} + \bar x_1,
\end{equation*}
a family of parabolas passing through
$(\bar x_1, \bar x_2)$. Among these, only those intersecting the origin can serve as terminal segments of minimum-time paths. For the HO subsystem, define $z(\tau) \coloneqq x_3(\tau) + i\,x_4(\tau)$.
Then from~\eqref{eq:ho-sol},
\begin{equation}\label{eq:ho-circle}
	z(\tau) = 2\,\bar C\,e^{-i\tau} \pm 1,
\end{equation}
which describes clockwise circular arcs of unit angular
speed about the centers~$(\pm 1, 0)$ with radius
$r = \|z(0) \mp 1\|$. Arcs through the origin have radius $1$ and form the initial and terminal segments of minimum-time paths.

Each arc endows the traversal time~$\tau$ with two geometric
interpretations (Fig.~\ref{fig:time_geometry}). In the RB plane,
$\tau = |x_2(\tau) - x_2(0)|$ is the displacement along
$x_2$~\eqref{eq:rb-sol}. In the HO plane, $\tau$ is the clockwise
central angle from $z(0)$ to $z(\tau)$~\eqref{eq:ho-circle}, equal
to arc length on unit-radius arcs. 
Each bang-bang time interval
(Fig.~\ref{fig:admissible_control}) is thus a parabolic height in
the RB plane, and an arc angle in the HO plane.



\begin{figure}
	\centering
	\footnotesize
	\def\svgwidth{1.0\linewidth}
	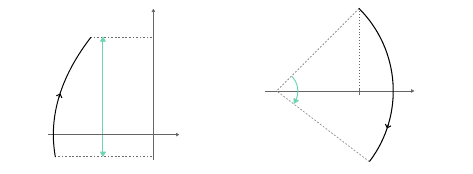
	\caption{Geometric interpretation of traversal
		time~$\tau$: vertical distance
		$|x_2(\tau)\!-\!x_2(0)|$ in the RB phase plane
		(left), clockwise swept angle from~$z(0)$
		to~$z(\tau)$ in the HO phase plane~(right).}
	\label{fig:time_geometry}
\end{figure}

Properties~(P1)--(P3) restrict every time-optimal
rest-to-rest maneuver to the palindromic
trajectories of Fig.~\ref{fig:switching-geom}. By
symmetry, take a p-path with initial
displacement~$\xb_1=-L$. In the RB plane, the
trajectory departs from~$(-L, 0)$ on a p-arc of
height~$\alpha$, then descends on an n-arc
by~$T_s/2=T/2-\alpha$ to the symmetry line $x_1 = -L/2$ at
the midpoint required by~(P1). In the HO plane, it
departs from the origin on a unit-radius p-arc
centered at~$(+1, 0)$, sweeps a clockwise
angle~$\alpha$ to the first switch, then continues
on a circle centered at~$(-1, 0)$ for an
angle~$T_s/2$ to the $x_4$-axis crossing required
by~(P1). In each plane, the second half mirrors the first by~(P2). This closes the trajectory at the origin.

\begin{figure}
	\centering
	\footnotesize
	\def\svgwidth{0.66\linewidth}
	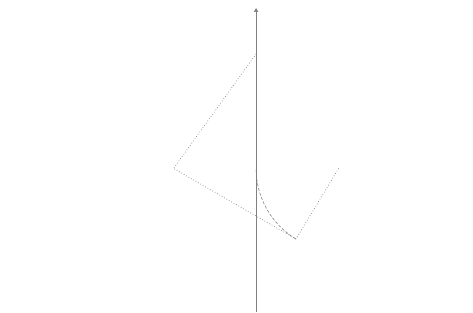
	\caption{Canonical HO trajectories (first half-maneuver). 
		The first arc (solid) ends at switching point $P$. The 
		second arc terminates on the $x_4$-axis at $Q$ ($k$ odd) 
		or $S$ ($k$ even). 
	}
	\label{fig:no_quicker_solution_2}
\end{figure}

\subsection{The Synchronization Angle}%
\label{sec:sync-geom}
We now determine the synchronization half-duration~$T_s/2$ for a given boost angle~$\alpha$: (P1) fixes $T_s/2$ up to an integer~$k$, the number of $x_4$-axis crossings.

The first bang ($u = +1$, duration~$\alpha$) traces a
unit-radius arc from the origin to the switching point
$P = (1 - \cos\alpha,\, \sin\alpha)$. After the switch, the
state moves along an arc centered at~$(-1, 0)$ with radius
\begin{equation}\label{eq:r-def}
	R = r(\alpha) \coloneqq \sqrt{5 - 4\cos\alpha},
\end{equation}
attaining its extrema $r = 1$ at $\alpha = 2n\pi$ and $r = 3$ at
$\alpha = (2n+1)\pi$.
%
By~(P1), $x_3(T/2) = 0$. The second bang ($u = -1$) starting from~$P$ produces $x_3(\tau) = -1 + (2 - \cos\alpha)\cos\tau + \sin\alpha\sin\tau$,  obtained by rotating $P$ clockwise by $\gamma$. Setting $x_3 = 0$ after a sweep of angle~$\gamma$ (Fig.~\ref{fig:switching-geom})  yields the \emph{midpoint-closure condition}
\begin{equation}\label{eq:midpoint} 
	(2 - \cos\alpha)\cos\gamma + \sin\alpha\,\sin\gamma = 1. 
\end{equation} 

\begin{figure*}
	\footnotesize
	\centering
	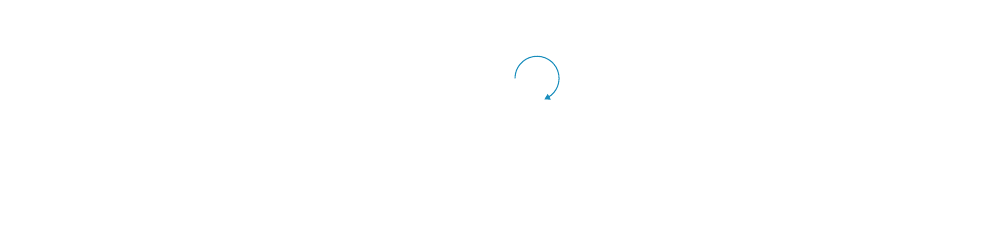
	\caption{Boost bang-duration~$\alpha$ and synchronization 
		half-duration~$T_s/2 = g(\alpha)$ in the rigid-body phase 
		plane (left, vertical segments) and harmonic-oscillator 
		phase plane (middle, arc angles). Right: decomposition 
		$g(\alpha) = \gamma_1 + \gamma_2$.}
	\label{fig:switching-geom}
\end{figure*}

The second arc reaches the $x_4$-axis at two points: $Q = (0, -\sqrt{R^2 -1})$ (first crossing reached clockwise from~$P$) and $S = (0, +\sqrt{R^2 - 1})$ (second); see Fig.~\ref{fig:no_quicker_solution_2}. 
At $\alpha = 2n\pi$ the radius is~$1$, $P$ returns to the origin, and the synchronization phase vanishes, giving the \emph{natural motions} of Theorem~\ref{thm:analytical_sol}.
%
%
The arc
$\widehat{PQ}$ decomposes at~$(-1, 0)$ into the two right-triangle
angles, see Fig.~\ref{fig:switching-geom}~(right),
\begin{equation}\label{eq:gamma_def}
	\gamma_1(\alpha) \coloneqq \arcsin\Bigl(\frac{\sin\alpha}{r(\alpha)}\Bigr), 
	\quad 
	\gamma_2(\alpha) \coloneqq \arccos\Bigl(\frac{1}{r(\alpha)}\Bigr).
\end{equation}
The arc from~$P$ to~$Q$ defines the \emph{synchronization angle}
\begin{equation}\label{eq:g_def}
	g(\alpha) \coloneqq \gamma_1(\alpha) + \gamma_2(\alpha)
	= \angle\widehat{PQ},
\end{equation}
with complementary arc from~$S$ to~$P$ with angle
\begin{equation}\label{eq:v-def}
	v(\alpha) \coloneqq \gamma_2(\alpha) - \gamma_1(\alpha)
	= \angle\widehat{SP}.
\end{equation}
Both map $\mathbb{R}_{\geqslant 0} \to \mathbb{R}_{\geqslant 0}$, are continuous, and $2\pi$-periodic.

Each $x_4$-axis crossing, together with any number
$j \in \mathbb{N}_{\geqslant 0}$ of additional
full revolutions, yields a distinct candidate
synchronization half-duration: $g(\alpha) + 2\pi j$
for arrivals at~$Q$ and
$[2\pi - v(\alpha)] + 2\pi j$ for arrivals at~$S$.
These two families are unified by a single
function indexed by a discrete \emph{mode
	index}~$k \in \mathbb{N}_{\geqslant 1}$ that
counts the $x_4$-axis crossings reached by the
synchronization arc:
\begin{equation}
	\label{eq:s_tilde}
	\tilde{s}(\alpha; k) \coloneqq
	\begin{cases}
		(k-1)\pi + g(\alpha),
		& k \text{ odd}, \\
		k\pi - v(\alpha),
		& k \text{ even}.
	\end{cases}
\end{equation}
The admissible synchronization half-duration $T_s/2$ takes
values exactly in $\{\tilde{s}(\alpha; k) : k \geqslant 1\}$
(Lemma~\ref{lem:canonical_structure}). For $\alpha \neq 2n\pi$,
the map $k \mapsto \tilde{s}(\alpha; k)$ is strictly increasing
(consecutive increments are $g + v$ and $2\pi - g - v$, both
positive since $r \in (1, 3]$), so $T_s/2$ determines $k$
uniquely. At natural motions $\alpha = 2n\pi$, $T_s/2 = 0$ for
all $k$ and the trajectory reduces to the rigid-body bang-bang
law.

\section{Candidate Optimal Controls}
\label{sec:canonical}

We introduce the family of \emph{canonical controls}: the bang-bang sequences consistent with properties~(P1)--(P3) and the switching geometry of Section~\ref{sec:geometry}. The family is indexed by the boost-bang duration~$\alpha$ and a mode index~$k$, whose optimal value $k^{*} = 1$ is identified in Section~\ref{sec:analytical_solution}. 
%
%
Within the type-1 subfamily ($k = 1$), the boost-bang duration~$\alpha$, maneuver time~$T$, and displacement~$L$ determine one another through strictly increasing bijections (Lemma~\ref{lem:f_w_properties}), and type-1 maximizes displacement at fixed maneuver time (Lemma~\ref{lem:dominance}). These are the two ingredients for establishing $k^{*} = 1$. Along the way, the midpoint-closure condition of Section~\ref{sec:geometry} is brought into a compact form,~\eqref{eq:closure_compact}--\eqref{eq:Ts_branch}, that powers the proofs here and recurs throughout Sections~\ref{sec:flexibility_penalty}--\ref{sec:design}.

%
%
%
%

\subsection{The Canonical Control Family}
\label{sec:canonical_family}
\begin{definition}\label{def:canonical}
	For $(\alpha,k)\in\mathbb{R}_{\geqslant 0}\times\mathbb{N}_{\geqslant 1}$ and $\bar{x}_1\neq 0$, with
	$\tilde{s}(\alpha;k)$ the synchronization half-duration~\eqref{eq:s_tilde}, set the
	\emph{maneuver time} and \emph{synchronization time}
	\begin{equation}\label{eq:f_tilde}
		T \coloneqq \tilde{f}(\alpha;k) \coloneqq 2\bigl(\alpha+\tilde{s}(\alpha;k)\bigr),
		\qquad T_s \coloneqq 2\tilde{s}(\alpha;k).
	\end{equation}
	The \emph{type-$k$ canonical control}
	$u(\,\cdot\,;\alpha,\bar{x}_1,k)\colon[-T/2,T/2]\to\{-1,+1\}$ is the bang--bang signal
	\begin{equation}\label{eq:canonical_control}
		u(\tau;\alpha,\bar{x}_1,k) = \operatorname{sgn}(\bar{x}_1)\cdot
		\begin{cases}
			-1, & \tau\in[-T/2,\,-T_s/2],\\
			+1, & \tau\in(-T_s/2,\,0],\\
			-1, & \tau\in(0,\,T_s/2],\\
			+1, & \tau\in(T_s/2,\,T/2].
		\end{cases}
	\end{equation}
\end{definition}

\begin{lemma}\label{lem:canonical_structure}
	Let $u^{*}$ be a time-optimal control of Problem~\ref{prob:top} with $\bar{x}_1 \neq 0$, and let
	$\alpha^{*} \geqslant 0$ be the duration of its first bang. There is a unique mode index
	$k^{*} \in \mathbb{N}_{\geqslant 1}$ such that
	\begin{equation}\label{eq:opt_control_law_canonical}
		u^{*}(\tau) = u(\tau;\, \alpha^{*}, \bar{x}_{1}, k^{*}),
		\qquad \tau \in [-T^{*}/2,\, T^{*}/2],
	\end{equation}
	with $T^{*} = \tilde{f}(\alpha^{*}; k^{*})$ and $T_{s}^{*} = 2\tilde{s}(\alpha^{*}; k^{*})$; moreover
	$\alpha^{*}$ satisfies the reachability condition
	\begin{equation}\label{eq:w_tilde}
		\tilde{w}(\alpha^{*}; k^{*}) = L,
		\quad
		\tilde{w}(\alpha;k) \coloneqq 2\alpha^2 - \bigl(\alpha - \tilde{s}(\alpha;k)\bigr)^2 .
	\end{equation}
\end{lemma}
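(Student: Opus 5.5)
We combine the structural facts (P1)--(P3) with the phase-plane geometry of Section~\ref{sec:geometry}. The aim is to show that the switching pattern of $u^{*}$ is the palindrome of Definition~\ref{def:canonical}, that the half-duration of its middle bangs must equal one of the values $\tilde s(\alpha^{*};k)$, and that the rigid-body endpoint condition gives exactly \eqref{eq:w_tilde}.

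\emph{Step 1: the switching pattern.} By (P3), $u^{*}$ is bang--bang with at most three switches. By (P2) it is odd on $[-T^{*}/2,T^{*}/2]$, so a switch occurs at $\tau=0$, since otherwise $u^{*}$ would be constant near $0$ and equal to its own negative. The remaining switches come in pairs $\pm\sigma$, so the intervals have lengths $\{\alpha^{*},\gamma,\gamma,\alpha^{*}\}$ with $\gamma=T^{*}/2-\alpha^{*}$ and alternating signs. The degenerate single-switch case corresponds to $\gamma=0$ or $\alpha^{*}=T^{*}/2$; I would reparametrize it so that it is still covered by Definition~\ref{def:canonical}. The sign of the first bang is $-\operatorname{sgn}(\bar x_1)$: with $\bar x_1=-L<0$ the body must first accelerate toward the origin, and otherwise $x_2$ would become negative and the rigid body could not reach $0$ with the given pattern. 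This pins $u^{*}$ to the form \eqref{eq:canonical_control} with $T_s^{*}=2\gamma$.

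\emph{Step 2: quantization of $\gamma$.} In the HO plane the first bang runs along the unit circle from the origin to $P=(1-\cos\alpha^{*},\sin\alpha^{*})$. By (P1), $x_3(0)=0$, so $\gamma$ solves the midpoint-closure condition \eqref{eq:midpoint}. For fixed $\alpha^{*}\neq 2n\pi$, the circle about $(-1,0)$ through $P$ has radius $r>1$ and meets the $x_4$-axis exactly at $Q$ and $S$. The clockwise angles from $P$ to these points are $g(\alpha^{*})$ and $2\pi-v(\alpha^{*})$ modulo $2\pi$, by \eqref{eq:gamma_def}--\eqref{eq:v-def}. Hence the set of solutions $\gamma\ge 0$ is exactly $\{\tilde s(\alpha^{*};k):k\ge1\}$. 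Verifying this requires checking the right-triangle decomposition at $(-1,0)$, in particular the sign of $\gamma_1$ when $\sin\alpha^{*}<0$. Uniqueness of $k^{*}$ then follows because $k\mapsto\tilde s(\alpha^{*};k)$ is strictly increasing, as already noted after \eqref{eq:s_tilde}.

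At natural motions $\alpha^{*}=2n\pi$ every $\tilde s(\alpha^{*};k)$ equals $0$, so uniqueness of $k^{*}$ fails. I expect this to be the main obstacle. I would handle it by showing that optimality forces $\gamma=0$ there, since any $\gamma>0$ would give a strictly longer maneuver than the rigid-body lower bound $2\sqrt L$, which is attained at $\gamma=0$. Uniqueness would then have to be read with the convention $k^{*}=1$, or taken modulo the coincidence of all $\tilde s$.

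\emph{Step 3: reachability.} Integrate the rigid-body pair over the first half. The velocity is $x_2(0)=\alpha^{*}-\gamma$, and the displacement over the first half is $\tfrac12\alpha^{*2}+\alpha^{*}\gamma-\tfrac12\gamma^{2}$. By (P1) this equals $L/2$. Multiplying by two gives $L=2\alpha^{*2}-(\alpha^{*}-\gamma)^{2}$, which with $\gamma=\tilde s(\alpha^{*};k^{*})$ is \eqref{eq:w_tilde}. The identity $T^{*}=2(\alpha^{*}+\gamma)=\tilde f(\alpha^{*};k^{*})$ follows directly from the interval lengths.
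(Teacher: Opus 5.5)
Your route is the paper's own. (P2)--(P3) give the palindrome $\{\alpha^*,\gamma,\gamma,\alpha^*\}$, the oscillator closure at $\tau=0$ quantizes $\gamma$ to the values $\tilde s(\alpha^*;k)$, monotonicity in $k$ fixes $k^*$, and integrating $x_2$ gives \eqref{eq:w_tilde}. Your half-interval computation with $x_1(0)=-L/2$ is equivalent to the paper's full-interval integral with $x_1(T^*/2)=0$, and it is correct.

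\textbf{Sign of the first bang.} Your justification here is false: a path starting with $u=+\operatorname{sgn}(\bar x_1)$ can reach the origin. Take $\bar x_1=-L$ and signs $(-1,+1,-1,+1)$. The first-half displacement is $-\tfrac12\alpha^2-\alpha\gamma+\tfrac12\gamma^2$, which is positive once $\gamma>(1+\sqrt2)\alpha$. Because $(z,u)\mapsto(-z,-u)$ is a symmetry of the oscillator, such n-paths obey the same closure $\gamma=\tilde s(\alpha;k)$. For $k\geqslant 2$ and small $\alpha$ one has $\gamma\geqslant 2\pi-v(\alpha)\approx 2\pi$, so the condition holds. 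For instance, $\alpha=0.1$, $k=2$ gives a feasible rest-to-rest maneuver with $L\approx 37.70$ and $T\approx 12.68$. Such controls are excluded by optimality, not by reachability: here $L<4\pi^2$, so the type-1 control reaches $L$ in time $h(L)<4\pi<T$. Your Step~1 therefore needs a comparison argument in the spirit of Lemma~\ref{lem:dominance}. The paper's proof does not argue this point either; it inherits the p-path orientation from Section~\ref{sec:portraits}.

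\textbf{Natural motions.} Your treatment rests on a false premise. With $g(2n\pi)=v(2n\pi)=0$, \eqref{eq:s_tilde} gives $\tilde s(2n\pi;k)=2\pi\lfloor k/2\rfloor$, i.e.\ $0,2\pi,2\pi,4\pi,4\pi,\dots$. The increment $g+v=2\gamma_2$ vanishes at $r=1$, but the increment $2\pi-g-v$ does not. Hence $T_s^*=0$ already forces $k^*=1$. The only ambiguity is between $k=2j$ and $k=2j+1$, which yield the same $\tilde s$ and hence the same control.

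Your proposed fix would also fail. For $\alpha^*=2n\pi$ and $\gamma=2\pi j$ one gets $L=4\pi^2(n^2+2nj-j^2)$, which is generally not a natural-motion step size. So the $\gamma=0$ maneuver does not reach this $L$, and $T>2\sqrt L$ alone gives no contradiction. What excludes these cases is the dominance argument of Theorem~\ref{thm:analytical_sol}: the type-1 control of the same duration $4\pi(n+j)$ reaches $4\pi^2(n+j)^2>L$. The paper's appeal to strict monotonicity in $k$ has the same small hole at $\alpha^*=2n\pi$, but it affects only the label, not the control.
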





\begin{proof}
	
	Assume $\bar x_1 < 0$. The case $\bar x_1 > 0$ follows by sign
	symmetry of~\eqref{eq:canonical_control}.
	By properties~(P2) and~(P3) and the geometric
	analysis of
	Sections~\ref{sec:portraits}--\ref{sec:sync-geom},
	the optimal control has the four-interval
	bang-bang form~\eqref{eq:canonical_control} for
	some $\alpha^{*} \geqslant 0$ and
	synchronization half-duration
	$T_{s}^{*}/2 = \tilde{s}(\alpha^{*};\, k^{*})$,
	with $k^{*} \in \mathbb{N}_{\geqslant 1}$
	uniquely determined by strict monotonicity of
	$\tilde{s}(\alpha^{*};\, \cdot)$ in~$k$
	(Section~\ref{sec:sync-geom}).
	Substitution into~\eqref{eq:f_tilde} and then
	into~\eqref{eq:canonical_control}
	yields~\eqref{eq:opt_control_law_canonical}.
	Integrating $\dot x_1 = x_2$ from $-T^{*}/2$ to $T^{*}/2$ with
	$x_1(-T^{*}/2) = \bar x_1$ gives 
	\[
	x_1(T^{*}/2) = \bar x_1 + \int_{-T^{*}/2}^{T^{*}/2}
	x_2(\tau)\,d\tau.
	\]
	The bang-bang form~\eqref{eq:canonical_control} makes $x_2$
	piecewise linear with slopes $\pm 1$, and rest-to-rest fixes
	$x_2(\pm T^{*}/2) = 0$. The integral evaluates to
	$-\operatorname{sgn}(\bar{x}_{1})\,
	\bigl[(T^{*})^{2}/4 - (T_{s}^{*})^{2}/2\bigr]
	= -\mathrm{sgn}(\bar x_1)\,\tilde w(\alpha^{*};
	k^{*}).$
	Setting	$x_1(T^{*}/2) = 0$ gives~\eqref{eq:w_tilde}.

\end{proof}

Under~\eqref{eq:w_tilde} the RB traverses $L$ in the same time the HO needs to close its orbit (cf.~Fig.~\ref{fig:switching-geom}).


\subsection{Properties of $f, g$ and $w$}
\label{sec:type1_properties}
Setting $k = 1$ in Definition~\ref{def:canonical}
gives $\tilde{s}(\alpha;1) = g(\alpha)$ and the
general maneuver-time~\eqref{eq:f_tilde} and
displacement~\eqref{eq:w_tilde} reduce to the type-1 maps $f, w \colon \mathbb{R}_{\geqslant 0} \to \mathbb{R}_{\geqslant 0}$ 
defined by
\begin{align}\label{eq:f_def}
	&f(\alpha) \coloneqq \tilde{f}(\alpha; 1)
	= 2\bigl(\alpha + g(\alpha)\bigr),
	\\
	\label{eq:w_def}
	&w(\alpha) \coloneqq \tilde{w}(\alpha; 1)
	= \frac{f(\alpha)^{2}}{4} - 2g(\alpha)^{2},
\end{align}
which satisfy $f(0) = w(0) = 0$. For a type-1 canonical control parameterized
by~$\alpha$, $T = f(\alpha)$ is the maneuver time and $L = w(\alpha)$ is the step size.



The central claim of this subsection, Lemma~\ref{lem:f_w_properties}
below, is that $f$ and $w$ are strictly increasing bijections. Combined with the step-size dominance of
Lemma~\ref{lem:dominance}, this establishes $k^{*} = 1$ and
thereby reduces the synthesis of
Section~\ref{sec:analytical_solution} to a scalar inversion. Both
proofs rest on a compact form of the closure, derived first. Along
the type-1 family, expanding $\cos(\alpha + g)$ shows that the
midpoint-closure condition~\eqref{eq:midpoint}, evaluated at the
synchronization angle $\gamma = g(\alpha)$, is equivalent to
\begin{equation}\label{eq:closure_compact}
	\cos(\alpha + g) \;=\; 2\cos g - 1,
\end{equation}
i.e.\ $\cos g = \cos^{2}\bigl((\alpha+g)/2\bigr)$, which reads in
time variables
\begin{equation}\label{eq:closure-cos}
	\cos(T_{s}/2) \;=\; \cos^{2}(T/4).
\end{equation}
Evaluated at the optimum, \eqref{eq:closure-cos} becomes the
closure equation of Theorem~\ref{thm:pythagorean}. Two consequences
follow, used throughout Sections~\ref{sec:flexibility_penalty}--\ref{sec:design}.

\begin{lemma}\label{lem:sync_bound}
	Along the type-1 family, $g(\alpha) \in [0, \pi/2]$, i.e.\ the
	synchronization time satisfies $T_{s} \in [0, \pi]$, with
	$g(\alpha) = 0$ exactly at $\alpha = 2n\pi$,
	$n \in \mathbb{N}_{0}$, and
	\begin{equation}\label{eq:Ts_branch}
		T_{s} = 4 \arcsin\Bigl(\tfrac{1}{\sqrt{2}}\,
		\bigl|\sin(T/4)\bigr|\Bigr).
	\end{equation}
	At these points the one-sided expansions are
	\begin{equation}\label{eq:g_expansion}
		g(2n\pi + \varepsilon)
		= (1 \pm \sqrt{2})\,\varepsilon + O(\varepsilon^{2}),
	\end{equation}
	with the upper sign for $\varepsilon > 0$ and the lower for
	$\varepsilon < 0$.
\end{lemma}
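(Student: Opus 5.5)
The plan is to work directly with the explicit definitions \eqref{eq:gamma_def}--\eqref{eq:g_def} and the compact closure \eqref{eq:closure_compact}. The proof has three steps: bound $g$, derive \eqref{eq:Ts_branch} from the closure, and expand $g$ at the natural motions.

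\emph{Range of $g$.} Since $r(\alpha)\in[1,3]$, we have $\gamma_2=\arccos(1/r)\in[0,\arccos(1/3)]$. Also $\gamma_1=\arcsin(\sin\alpha/r)$ has $|\gamma_1|\le\arcsin(1/r)$, because $|\sin\alpha|\le 1$. Hence
\[
g\le\arcsin(1/r)+\arccos(1/r)=\pi/2 .
\]
For $g\ge 0$, I compare $\sin^2\alpha/r^2$ with $\sin^2\gamma_2=1-1/r^2$. The inequality $|\gamma_1|\le\gamma_2$ is equivalent to $\sin^2\alpha\le r^2-1=4(1-\cos\alpha)$. Writing $c=\cos\alpha$, this reads $(1-c)(1+c)\le 4(1-c)$, which is true. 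Therefore $g=\gamma_2+\gamma_1\ge 0$.

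Equality $g=0$ requires $\gamma_1=-\gamma_2$. This needs either $1-c=0$, or $1+c=4$, which is impossible. So $g=0$ forces $\cos\alpha=1$, i.e.\ $\alpha=2n\pi$, where $r=1$ and $\gamma_1=\gamma_2=0$. This gives $T_s=2g\in[0,\pi]$.

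\emph{Formula \eqref{eq:Ts_branch}.} I use the closure in the form $\cos(T_s/2)=\cos^2(T/4)$ from \eqref{eq:closure-cos}. With the half-angle identity $1-\cos(T_s/2)=2\sin^2(T_s/4)$, this becomes $2\sin^2(T_s/4)=\sin^2(T/4)$. Since $T_s/4\in[0,\pi/4]$ by the first step, $\sin(T_s/4)\ge0$ and $\arcsin$ inverts on that range. This gives $T_s/4=\arcsin\bigl(|\sin(T/4)|/\sqrt2\bigr)$.

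The point needing care is that \eqref{eq:closure_compact} is only stated as equivalent to \eqref{eq:midpoint} at $\gamma=g$. I would verify that $\gamma=g$ indeed solves \eqref{eq:midpoint} by the rotation picture: rotating $P$ clockwise by $g$ about $(-1,0)$ lands on $Q$, which has $x_3=0$. The expansion of $\cos(\alpha+g)$ into \eqref{eq:closure_compact} is then routine algebra, using $\cos\gamma_2=1/r$, $\sin\gamma_2=\sqrt{r^2-1}/r$, $\sin\gamma_1=\sin\alpha/r$ and $\cos\gamma_1=(2-\cos\alpha)/r$. The last identity holds since $r^2-\sin^2\alpha=(2-\cos\alpha)^2$ and $2-\cos\alpha>0$.

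\emph{Expansion \eqref{eq:g_expansion}.} By $2\pi$-periodicity it suffices to take $\alpha=\varepsilon$ small. Then
\[
r^2=5-4\cos\varepsilon=1+2\varepsilon^2+O(\varepsilon^4),\qquad r=1+\varepsilon^2+O(\varepsilon^4).
\]
This gives $\gamma_1=\arcsin\bigl(\varepsilon+O(\varepsilon^3)\bigr)=\varepsilon+O(\varepsilon^2)$.

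The main obstacle is $\gamma_2$, which is not smooth at $r=1$. Here $\arccos(1/r)=\arcsin\bigl(\sqrt{r^2-1}/r\bigr)$, and $\sqrt{r^2-1}=\sqrt2\,|\varepsilon|\bigl(1+O(\varepsilon^2)\bigr)$. Hence $\gamma_2=\sqrt2|\varepsilon|+O(\varepsilon^2)$. Adding the two pieces gives $g=\varepsilon+\sqrt2|\varepsilon|+O(\varepsilon^2)$, which is $(1\pm\sqrt2)\varepsilon$ with the stated signs.
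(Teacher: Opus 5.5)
Your proof is correct. For the range of $g$, its zero set, and the one-sided expansions, you take a genuinely different route from the paper; the derivation of \eqref{eq:Ts_branch} is the same in both.

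The paper argues implicitly through the closure. Because the right side of \eqref{eq:closure-cos} is a square, $\cos g \geq 0$; continuity of $g$ from $g(0)=0$ and the intermediate value theorem then keep $g$ out of $(\pi/2,3\pi/2)$. The zero set is read off from \eqref{eq:closure_compact}. The slopes come from expanding the closure to $(\varepsilon+g)^2 = 2g^2 + O^4$ and choosing the root $\varepsilon+g=\pm\sqrt{2}\,g$ by the sign of $\varepsilon$.

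You work instead from the explicit $\gamma_1,\gamma_2$:
\begin{itemize}
\item the identity $\arcsin(1/r)+\arccos(1/r)=\pi/2$ gives the upper bound;
\item the equivalence $\sin^2\alpha \le r^2-1 \Leftrightarrow (1-\cos\alpha)(\cos\alpha-3)\le 0$ gives $|\gamma_1|\le\gamma_2$, hence $g\ge 0$ together with the exact zero set;
\item the expansion splits into a smooth odd part $\gamma_1=\varepsilon+O(\varepsilon^3)$ and a kinked even part $\gamma_2=\sqrt{2}\,|\varepsilon|+O(|\varepsilon|^3)$.
\end{itemize}

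What your route buys:
\begin{itemize}
\item It actually proves $g\ge 0$. The paper's argument needs this for the lower bound, but only asserts it in Section~\ref{sec:sync-geom}; without it the same argument only confines $g$ to $[-\pi/2,\pi/2]$.
\item It needs no continuity argument.
\item It shows that the kink at natural motions comes entirely from $\gamma_2$.
\item It controls the remainder directly, rather than through a leading-order root selection whose error term is left untracked.
\end{itemize}
Your check that $g$ solves \eqref{eq:midpoint}, via $(2-\cos\alpha)^2+\sin^2\alpha=r^2$, also closes a step the paper takes from the figure. Note that the rewriting \eqref{eq:midpoint} $\Leftrightarrow$ \eqref{eq:closure_compact} holds for every $\gamma$, so this is the only thing that needed checking.

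What the paper's route buys: it shows that $T_s\le\pi$ is forced by the closure alone, independent of the arc parametrization. That is the viewpoint reused in the uniqueness part of Theorem~\ref{thm:pythagorean}. Its implicit expansion is also the same device applied at natural motions in Theorem~\ref{thm:sign_reversal}.
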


\begin{proof}
	Since the right side of~\eqref{eq:closure-cos} is nonnegative,
	$\cos g \geqslant 0$ for every $\alpha$. Every value in
	$(\pi/2,\, 3\pi/2)$ has negative cosine, so the continuous
	function $g$, which starts at $g(0) = 0$, can never enter this
	interval: by the intermediate value theorem it would first
	have to attain a value just above $\pi/2$, where
	$\cos g < 0$. Hence $g \in [0, \pi/2]$, i.e.\
	$T_{s} \in [0, \pi]$, and on this
	range~\eqref{eq:closure-cos} inverts uniquely: using
	$\cos(T_{s}/2) = 1 - 2\sin^{2}(T_{s}/4)$ and
	$\cos^{2}(T/4) = 1 - \sin^{2}(T/4)$, it
	gives $2\sin^{2}(T_{s}/4) = \sin^{2}(T/4)$,
	whence~\eqref{eq:Ts_branch}. Finally, $g = 0$
	in~\eqref{eq:closure_compact} forces $\cos\alpha = 1$. Conversely, at $\alpha = 2n\pi$ the closure reads
	$\cos g = 2\cos g - 1$, i.e.\ $\cos g = 1$, so $g = 0$ by
	$g \in [0, \pi/2]$.
	
	For the expansions, write $\alpha = 2m\pi + \varepsilon$ with
	$|\varepsilon|$ small. By continuity, $g \to 0$ as
	$\varepsilon \to 0$. Expanding both sides
	of~\eqref{eq:closure_compact} about
	$(\alpha, g) = (2m\pi, 0)$,
	$\cos(\varepsilon + g) = 1 - (\varepsilon+g)^{2}/2 + O^{4}$ and
	$2\cos g - 1 = 1 - g^{2} + O^{4}$, where $O^{4}$ collects
	fourth-order terms, so the closure reduces to
	$(\varepsilon + g)^{2} = 2g^{2} + O^{4}$, i.e.\
	$\varepsilon + g = \pm\sqrt{2}\,g$ to leading order. The sign of
	$\varepsilon$ selects the root. Since $g \geqslant 0$, the
	positive root gives $\varepsilon = (\sqrt{2}-1)\,g \geqslant 0$
	and the negative root gives
	$\varepsilon = -(\sqrt{2}+1)\,g \leqslant 0$. Solving each for
	$g$ yields~\eqref{eq:g_expansion}.
\end{proof}

The proofs below also use the following elementary chord--arc
inequality. It is stated separately because it recurs in
Lemma~\ref{lem:closure-derivative}.

\begin{lemma}\label{lem:sinc_gap}
	Let $T$ and $T_{s}$ be reals with $0 < T_{s} \leqslant \pi$ and
	$T > T_{s}$. Then
	\begin{equation}\label{eq:sinc_gap}
		T \sin(T_{s}/2) - T_{s} \sin(T/2) \;>\; 0 ,
	\end{equation}
	equivalently
	$\tfrac{2\sin(T_{s}/2)}{T_{s}} > \tfrac{2\sin(T/2)}{T}$, which is geometrically interpretable as the shorter arc is straighter.
\end{lemma}

\begin{proof}
	The two forms are related by division by $T T_{s} > 0$. We
	prove the second, i.e.\
	$\sin(T_{s}/2)/(T_{s}/2) > \sin(T/2)/(T/2)$.
	If $T/2 \leqslant \pi/2$, both arguments lie in $(0, \pi/2]$,
	where $x \mapsto \sin(x)/x$ is strictly decreasing (its
	derivative has the sign of $x\cos x - \sin x < 0$, by $\tan x > x$).
	If $T/2 > \pi/2$, then
	$\sin(T_{s}/2)/(T_{s}/2) \geqslant \sin(\pi/2)/(\pi/2) = 2/\pi$,
	whereas $\sin(T/2)/(T/2) \leqslant 2/T < 2/\pi$.
\end{proof}

\begin{lemma}\label{lem:f_w_properties}
	The type-1 maps $f$ and $w$ defined
	in~\eqref{eq:f_def}--\eqref{eq:w_def} are continuous, strictly
	increasing bijections of $\mathbb{R}_{\geqslant 0}$, with
	continuous, strictly increasing inverses. Moreover, $f$ is
	differentiable on
	$\mathbb{R}_{\geqslant 0} \setminus \{2\pi n\}_{n=0}^{\infty}$
	with strictly positive derivative. At the non-smooth points
	$f(2m\pi + \varepsilon) = f(2m\pi)
	+ 2(2 \pm \sqrt{2})\,\varepsilon + O(\varepsilon^{2})$, the
	upper sign for $\varepsilon \to 0^{+}$ and the lower for
	$\varepsilon \to 0^{-}$.
\end{lemma}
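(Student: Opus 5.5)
The plan is to work along the type-1 family with the compact closure~\eqref{eq:closure_compact}, writing $T/2 = \alpha+g$ and $T_s/2 = g$. For smoothness away from $\alpha = 2n\pi$, note that there $r(\alpha) > 1$, so $\gamma_2 = \arccos(1/r)$ is smooth. The argument of $\gamma_1$ never reaches $\pm1$, because $r^2 - \sin^2\alpha = (2-\cos\alpha)^2 > 0$. Hence $g$ is $C^\infty$ on $\mathbb{R}_{\geqslant 0}\setminus\{2n\pi\}$, and continuous everywhere as a composition of continuous maps. By Lemma~\ref{lem:sync_bound} we also have $g\in(0,\pi/2]$ there, so $\sin g > 0$.

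\textbf{Derivative of $f$.} Differentiating~\eqref{eq:closure_compact} implicitly gives $\sin(\alpha+g)(1+g') = 2\sin g\, g'$. With $D \coloneqq 2\sin g - \sin(\alpha+g)$, this yields
\[
g' = \frac{\sin(\alpha+g)}{D}, \qquad 1+g' = \frac{2\sin g}{D}, \qquad f' = \frac{4\sin g}{D}.
\]
The step I expect to be the crux is showing $D>0$. I will use the closure again. Put $c=\cos g\in[0,1)$. Then $\sin^2(\alpha+g) = 1-(2c-1)^2 = 4c(1-c)$, while $4\sin^2 g = 4(1-c)(1+c)$. Since $c<1+c$ and $1-c>0$, this gives $|\sin(\alpha+g)| < 2\sin g$, so $D>0$. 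Together with $\sin g>0$, we get $f'>0$ off the points $2n\pi$.

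\textbf{Behaviour at $2m\pi$ and bijectivity of $f$.} At $\alpha = 2m\pi$, substituting the one-sided expansions~\eqref{eq:g_expansion} into $f = 2(\alpha+g)$ gives $f(2m\pi+\varepsilon) = f(2m\pi) + 2(1 + 1\pm\sqrt2)\varepsilon + O(\varepsilon^2)$, which is the stated formula. A continuous function with positive derivative off a discrete set is strictly increasing. Also $f(0)=0$ and $f(\alpha)\geqslant 2\alpha\to\infty$. Hence $f$ is a continuous strictly increasing bijection of $\mathbb{R}_{\geqslant 0}$, and its inverse is automatically continuous and strictly increasing.

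\textbf{Bijectivity of $w$.} Write $w = (\alpha+g)^2 - 2g^2$. Off the points $2n\pi$, the formulas above give
\[
w' = 2(\alpha+g)(1+g') - 4g g' = \frac{4\bigl[(\alpha+g)\sin g - g\sin(\alpha+g)\bigr]}{D}.
\]
The numerator equals $2\bigl[T\sin(T_s/2) - T_s\sin(T/2)\bigr]$ with $0<T_s\leqslant\pi$ and $T>T_s$, since $\alpha>0$. It is therefore positive by Lemma~\ref{lem:sinc_gap}, and $D>0$ as shown. So $w$ is continuous with $w(0)=0$ and is strictly increasing. It is unbounded because $w \geqslant \alpha^2 - \pi^2/2$, using $g\leqslant\pi/2$. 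Thus $w$ is a bijection of $\mathbb{R}_{\geqslant 0}$ with continuous increasing inverse.
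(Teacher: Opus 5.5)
Your proof is correct. For $f$ it follows the paper's argument. The paper shows $D>0$ by factoring the identity $(2\sin g)^2-\sin^2(\alpha+g)=4(1-\cos g)>0$ and noting that the two factors sum to $4\sin g>0$. Your comparison $4c(1-c)<4(1-c)(1+c)$ is the same identity written directly as $|\sin(\alpha+g)|<2\sin g$.

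The genuine difference is in how you handle $w$. The paper reparametrizes the type-1 family by the maneuver time $T$. It uses the branch \eqref{eq:Ts_branch} to make $T_s$ a smooth function of $T$ on each interval $(4m\pi,4(m+1)\pi)$, and proves $T>T_s$ separately with a case split at $T=2\pi$. It then derives \eqref{eq:dTs_dT}--\eqref{eq:dL_dT} and composes with $f$. You instead differentiate $w=(\alpha+g)^2-2g^2$ directly in $\alpha$, reusing $g'$ and $D$ from the $f$ step. In this parametrization the hypothesis $T-T_s=2\alpha>0$ of Lemma~\ref{lem:sinc_gap} is immediate, and no branch of the closure needs to be selected.

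Your route is shorter and self-contained. The paper's route yields the $T$-parametrized formulas and the observation that $L$ depends on $T$ alone through the closure. It reuses these for the uniqueness part of Theorem~\ref{thm:pythagorean} and in Lemma~\ref{lem:closure-derivative}. Nothing is lost on your side, since the chain rule with $f'>0$ gives $dL/dT=w'/f'=\bigl[(\alpha+g)\sin g-g\sin(\alpha+g)\bigr]/\sin g$ and $dT_s/dT=2g'/f'=\sin(\alpha+g)/(2\sin g)$. These are exactly \eqref{eq:dL_dT} and \eqref{eq:dTs_dT} under $T=2(\alpha+g)$, $T_s=2g$.
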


\begin{proof}
	We first verify the endpoint conditions and then
	address regularity.
	From \eqref{eq:f_def} with
	$g(0) = 0$ and $g \geqslant 0$, we have
	$f(0) = 0$ and $f(\alpha) \geqslant 2\alpha \to \infty$ as
	$\alpha \to \infty$. Likewise $w(0) = 0$ by~\eqref{eq:w_def}, and, since
	$g \leqslant \pi/2$ (Lemma~\ref{lem:sync_bound}),
	$w(\alpha) = (\alpha + g)^{2} - 2g^{2} \geqslant
	\alpha^{2} - \pi^{2}/2 \to \infty$.
	
	\emph{The map $f$.}\;
	$f$ is continuous by continuity of $g$
	(Section~\ref{sec:sync-geom}). On each open interval
	$\bigl(2m\pi,\, 2(m+1)\pi\bigr)$, $g$ is smooth: the arcsin
	argument in~\eqref{eq:gamma_def} satisfies
	$|\sin\alpha / r| < 1$ for all $\alpha$, since
	$\sin^{2}\alpha < r^{2}$ is equivalent to
	$(\cos\alpha - 2)^{2} > 0$, and the arccos argument satisfies
	$1/r < 1$ precisely for $r > 1$, i.e.\ away from the
	points $\alpha = 2m\pi$~\eqref{eq:r-def}. Both arguments are smooth in
	$\alpha$, hence so are $\gamma_{1}$, $\gamma_{2}$,
	and~$g$~\eqref{eq:g_def}. Moreover $g > 0$ there, by
	Lemma~\ref{lem:sync_bound}. With
	$g \in (0, \pi/2]$ (Lemma~\ref{lem:sync_bound}), substituting
	$\cos(\alpha+g) = 2\cos g - 1$ from the
	closure~\eqref{eq:closure_compact} yields the identity
	\begin{align*}
		(2\sin g)^{2} - \sin^{2}(\alpha + g) 
		&= 3 + \cos^{2}(\alpha + g) - 4\cos^{2} g \\
		&= 3 + (2\cos g - 1)^{2} - 4\cos^{2} g \\
		&= 4\,(1 - \cos g) \;>\; 0 .
	\end{align*}
	The left side factors as
	$\bigl(2\sin g - \sin(\alpha+g)\bigr)
	\bigl(2\sin g + \sin(\alpha+g)\bigr)$. Since the product is
	positive, both factors have the same sign, and their sum
	$4\sin g > 0$ makes both positive. In particular
	$D \coloneqq 2\sin g - \sin(\alpha+g) > 0$, and
	differentiating~\eqref{eq:closure_compact} gives $g' = \sin(\alpha+g)/D$, whence
	\begin{equation}\label{eq:f_prime}
		f'(\alpha) = 2\bigl(1 + g'(\alpha)\bigr)
		= {4\sin g}/{D} > 0.
	\end{equation}
	Since $f$ is continuous on each closed interval
	$\bigl[2m\pi,\, 2(m+1)\pi\bigr]$ and $f' > 0$ on its interior,
	the mean value theorem gives strict increase on each such
	interval, hence on $\mathbb{R}_{\geqslant 0}$. Combined with the
	endpoint conditions, $f$ is a bijection of
	$\mathbb{R}_{\geqslant 0}$ onto itself, so $f^{-1}$ is
	continuous and strictly increasing.
	At the non-smooth points, $f = 2(\alpha + g)$ and the one-sided
	expansions~\eqref{eq:g_expansion} of
	Lemma~\ref{lem:sync_bound} give
	$f(2m\pi + \varepsilon) = f(2m\pi)
	+ 2\bigl(1 + (1 \pm \sqrt{2})\bigr)\varepsilon
	+ O(\varepsilon^{2})$, which is the stated expansion.
	
	\emph{The map $w$.}\;
	Continuity of $w$ follows from that of $f$ and $g$. For
	monotonicity, parametrize the family by the maneuver time: by
	the first part of the proof, $f$ is a continuous, strictly
	increasing bijection
	with $f(2m\pi) = 4m\pi$ (as $g(2m\pi) = 0$,
	Lemma~\ref{lem:sync_bound}), so
	$\alpha \in \bigl(2m\pi,\, 2(m+1)\pi\bigr)$ corresponds to
	$T \in \bigl(4m\pi,\, 4(m+1)\pi\bigr)$,
	$m \in \mathbb{N}_{\geqslant 0}$. Since~\eqref{eq:closure-cos}
	has a unique solution $T_{s} \in [0, \pi]$ for each $T$, the
	branch~\eqref{eq:Ts_branch} evaluated at $T = f(\alpha)$ equals
	$2g(\alpha)$. Along the family, the quantities $T_{s}$
	and, by~\eqref{eq:w_def}, $L = T^{2}/4 - T_{s}^{2}/2$ are therefore
	determined by the maneuver time alone, and we may study their
	dependence on $T$. On each interval
	$\bigl(4m\pi,\, 4(m+1)\pi\bigr)$, $\sin(T/4) \neq 0$, so
	$T_{s}$ depends smoothly on $T$ through~\eqref{eq:Ts_branch},
	with $0 < T_{s} \leqslant \pi$.
	Moreover $T > T_{s}$, for $T \geqslant 2\pi$ this is immediate
	from $T_{s} \leqslant \pi$. For $0 < T < 2\pi$, both $T/4$ and
	$T_{s}/4$ lie in $[0, \pi/2)$, where the sine is strictly
	increasing, and~\eqref{eq:Ts_branch} gives
	$\sin(T/4) = \sqrt{2}\,\sin(T_{s}/4) > \sin(T_{s}/4)$, the
	inequality strict since $T_{s} > 0$, hence $T/4 > T_{s}/4$.
	Differentiating
	$\sin^{2}(T/4) = 2\sin^{2}(T_{s}/4)$ gives
	\begin{equation}\label{eq:dTs_dT}
		\frac{dT_{s}}{dT} = \frac{\sin(T/2)}{2\sin(T_{s}/2)} ,
	\end{equation}
	from where
	\begin{equation}\label{eq:dL_dT}
		\frac{dL}{dT}
		= \frac{T}{2} - T_{s}\,\frac{dT_{s}}{dT}
		= \frac{T\sin(T_{s}/2) - T_{s}\sin(T/2)}{2\sin(T_{s}/2)}
		\;>\; 0,
	\end{equation}
	where positivity of the numerator is
	Lemma~\ref{lem:sinc_gap}. Since $L$ depends continuously on $T$ on
	$\mathbb{R}_{\geqslant 0}$ and, by~\eqref{eq:dL_dT}, with
	strictly positive derivative on the interior of each interval
	$\bigl[4m\pi,\, 4(m+1)\pi\bigr]$, the mean value theorem shows
	that $L$ increases strictly with $T$ on each such closed
	interval, hence on $\mathbb{R}_{\geqslant 0}$.
	Since $T = f(\alpha)$ increases strictly with $\alpha$
	(shown above), so does $L = w(\alpha)$: the map $w$ is strictly
	increasing. Combined with the
	endpoint conditions, $w$ is a bijection of
	$\mathbb{R}_{\geqslant 0}$ onto itself, so $w^{-1}$ is
	continuous and strictly increasing.
\end{proof}
Type-1 maneuver time is thus a function of step size,
\begin{equation}\label{eq:h_def}
	h := f \circ w^{-1},
\end{equation}
a continuous, strictly increasing bijection of $\mathbb{R}_{\geq 0}$, as $h^{-1}$.




\subsection{Type-1 Dominance}
\label{sec:type1_dominance}

\begin{lemma}\label{lem:dominance}
	Type-1 dominates every type $k \geqslant 2$:
	\begin{enumerate}[label=(\roman*)]
		\item\label{itm:canonical-path-x4-intersec}
		It is strictly faster at fixed boost-bang duration: for
		all $\alpha > 0$,\; $f(\alpha) < \tilde{f}(\alpha; k)$.
		\item\label{itm:step_dominance}
		It achieves a strictly larger step size at fixed maneuver
		time: for all $T > 0$, if $\alpha_1 \coloneqq f^{-1}(T)$
		and $\alpha_k$ satisfies $\tilde{f}(\alpha_k; k) = T$, then $w(\alpha_1) > \tilde{w}(\alpha_k; k)$.
	\end{enumerate}
\end{lemma}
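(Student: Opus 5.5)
The plan is to reduce both parts to a comparison of synchronization half-durations. Two facts from the text do this. First, $\tilde f(\alpha;k)=2(\alpha+\tilde s(\alpha;k))$ by \eqref{eq:f_tilde}. Second, the computation in the proof of Lemma~\ref{lem:canonical_structure} gives $\tilde w(\alpha;k)=T^{2}/4-T_{s}^{2}/2$ with $T=\tilde f(\alpha;k)$ and $T_s=2\tilde s(\alpha;k)$. So (i) amounts to $\tilde s(\alpha;k)>g(\alpha)$ at fixed $\alpha$. Part (ii) amounts to showing that, at fixed $T$, the type-$k$ synchronization time exceeds the type-1 one. Both will follow from uniform bounds on the arc angles $g$ and $v$.

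\emph{Uniform bounds.} Since $r(\alpha)\in[1,3]$, we have $\gamma_2=\arccos(1/r)\in[0,\arccos(1/3)]$, and $\arccos(1/3)<\pi/2$. For $\gamma_1$, maximize $\sin^2\alpha/r^2=(1-c^2)/(5-4c)$ over $c=\cos\alpha$. The stationary condition is $4c^2-10c+4=0$, giving $c=1/2$ and the maximum value $1/4$. Hence $|\gamma_1|\leqslant\pi/6$. Consequently $g+v=2\gamma_2<\pi$ and $v=\gamma_2-\gamma_1<\pi/2+\pi/6<\pi$. Lemma~\ref{lem:sync_bound} already gives $0\leqslant g\leqslant\pi/2$, and $v\geqslant0$ is stated in Section~\ref{sec:sync-geom}.

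\emph{Part (i).} Fix $\alpha>0$ and $k\geqslant2$. I deliberately do not rely on the remark that $\tilde s=0$ at natural motions, since that holds only for $k=1$. Instead I use \eqref{eq:s_tilde} directly and split by parity:
\begin{itemize}
\item For odd $k\geqslant3$, $\tilde s(\alpha;k)=(k-1)\pi+g\geqslant 2\pi+g>g$.
\item For even $k\geqslant2$, $\tilde s(\alpha;k)=k\pi-v\geqslant2\pi-v$, and $2\pi-v>g$ is equivalent to $g+v<2\pi$. This holds by the bound $g+v<\pi$.
\end{itemize}
In both cases $f(\alpha)=2(\alpha+g)<\tilde f(\alpha;k)$, and this holds at every $\alpha>0$, including $\alpha=2n\pi$.

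\emph{Part (ii).} Fix $T>0$, with $\alpha_1=f^{-1}(T)$ (well defined by Lemma~\ref{lem:f_w_properties}) and $\alpha_k$ as in the statement. Then
\[
w(\alpha_1)-\tilde w(\alpha_k;k)=\tfrac12\bigl[(T_s^{(k)})^2-(T_s^{(1)})^2\bigr],
\]
so it suffices to show $T_s^{(k)}>T_s^{(1)}\geqslant0$. Lemma~\ref{lem:sync_bound} gives $T_s^{(1)}=2g(\alpha_1)\leqslant\pi$. For $k\geqslant2$ the bounds above give $T_s^{(k)}=2\tilde s(\alpha_k;k)\geqslant 2\min\{2\pi,\,2\pi-v(\alpha_k)\}>2\pi$. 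Hence $T_s^{(k)}>2\pi>\pi\geqslant T_s^{(1)}$, which proves the claim.

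The main obstacle is the bound $v<\pi$, or more precisely the lower bound on $\gamma_1$. The angle $\gamma_1$ is negative whenever $\sin\alpha<0$, so $v$ can exceed $\gamma_2$, and a naive bound $v\leqslant\gamma_2$ would be false. The elementary maximization giving $|\gamma_1|\leqslant\pi/6$ is exactly what closes this gap. Everything else is bookkeeping with \eqref{eq:s_tilde} and the reachability formula.
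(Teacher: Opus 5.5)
Your proof is correct. Part (i) is essentially the paper's argument, reducing the claim to $\tilde{s}(\alpha;1) < \tilde{s}(\alpha;k)$. Your direct parity comparison is a bit more careful, though. The paper cites strict monotonicity of $k \mapsto \tilde{s}(\alpha;k)$, which Section~\ref{sec:sync-geom} asserts only for $\alpha \neq 2n\pi$. Your bound $g+v = 2\gamma_2 < \pi$ also covers $\alpha = 2n\pi$, where in fact $\tilde{s}(2n\pi;2)=\tilde{s}(2n\pi;3)=2\pi$ rather than $0$.

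For part (ii) you take a different route. The paper first derives $\alpha_k < \alpha_1$ from (i) and the strict monotonicity of $f$ (Lemma~\ref{lem:f_w_properties}). It then uses that at fixed $T$ the displacement $W_T(\alpha) = T^2/4 - 2(\alpha - T/2)^2$ increases for $\alpha \le T/2$. Since $T/2-\alpha = \tilde{s}$, this is the same identity $\tilde{w} = T^2/4 - T_s^2/2$ you use, so both proofs come down to $\tilde{s}(\alpha_k;k) > g(\alpha_1)$ at fixed $T$.

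The difference is how that ordering is obtained:
\begin{itemize}
\item The paper gets it structurally from (i) plus monotonicity of $f$, with no bounds on $v$.
\item You get it from the uniform separation $T_s^{(1)} \le \pi < 2\pi < T_s^{(k)}$. This bypasses (i) and the monotonicity of $f$ entirely; you need $f$ only to define $\alpha_1$. It also gives a quantitative gap $w(\alpha_1) - \tilde{w}(\alpha_k;k) > 3\pi^2/2$.
\end{itemize}

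One remark: the step you call the main obstacle is easier than you present it. The range of $\arcsin$ gives $\gamma_1 \ge -\pi/2$, and $\gamma_2 \le \arccos(1/3) < \pi/2$, so $v < \pi$ follows at once. Your maximization giving $|\gamma_1| \le \pi/6$ is correct but not needed.
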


\begin{proof}
	\emph{Item~\textup{(i)}.}\;
	By~\eqref{eq:f_tilde} and~\eqref{eq:f_def},
	$f(\alpha) < \tilde{f}(\alpha; k)$ is equivalent to
	$\tilde{s}(\alpha;\, 1) < \tilde{s}(\alpha;\, k)$, which holds
	by strict monotonicity of $\tilde{s}(\alpha;\, \cdot)$ in~$k$
	(Section~\ref{sec:sync-geom}).
	
	\emph{Item~\textup{(ii)}.}\;
	We first show $\alpha_k < \alpha_1$, then prove that
	displacement is strictly increasing in~$\alpha$ at fixed~$T$.
	By Item~\textup{(i)} at $\alpha = \alpha_k$,
	$f(\alpha_k) < \tilde{f}(\alpha_k; k) = T = f(\alpha_1)$, and
	strict monotonicity of $f$
	(Lemma~\ref{lem:f_w_properties}) yields
	$\alpha_k < \alpha_1$.
	For any canonical control, regardless
	of type index $k$, with boost-bang duration
	$\alpha$ and maneuver time~$T$, the
	synchronization half-duration satisfies
	$\tilde{s} = T/2 - \alpha$
	(by~\eqref{eq:f_tilde}). Substituting
	into~\eqref{eq:w_tilde} eliminates both
	$\tilde{s}$ and $k$:
	\begin{equation*}
		\tilde{w}(\alpha; k)
		= 2\alpha^2 -
		\bigl(2\alpha - \tfrac{T}{2}\bigr)^2
		= \tfrac{T^2}{4} -
		2\bigl(\alpha - \tfrac{T}{2}\bigr)^2
		\eqqcolon W_T(\alpha).
	\end{equation*}
	Since $\tilde{s} \geq 0$ forces
	$\alpha \in (0, T/2]$, and $W_T$ is a downward parabola in
	$\alpha$ with vertex at $\alpha = T/2$, $W_T$ is strictly
	increasing on this interval.
	With $0 < \alpha_k < \alpha_1 \leq T/2$, this
	gives $\tilde{w}(\alpha_k; k) = W_T(\alpha_k)
	< W_T(\alpha_1) = w(\alpha_1)$.
\end{proof}


\section{The Analytical Solution}
\label{sec:analytical_solution}
We now assemble the results of Section~\ref{sec:canonical} into a complete analytical solution of Problem~\ref{prob:TOC}. Combining Lemma~\ref{lem:canonical_structure}, Lemma~\ref{lem:f_w_properties}, and Lemma~\ref{lem:dominance} yields $k^* = 1$ and reduces the synthesis to scalar inversion of~$w$. The resulting one-to-one correspondence between step size $L$, boost-bang duration~$\alpha$, and minimum maneuver time~$T$ is summarized in Fig.~\ref{fig:commutative}.
\begin{figure}
	\centering
%
%

\begin{tikzpicture}[
	node distance = 1.2cm and 2.1cm,
	varNode/.style={
		circle,
		draw=white,
		thin,
		minimum size=0.5cm,
		fill=white,
		font=\normalsize,
		outer sep=1pt
	},
	descNode/.style={
		font=\footnotesize,
		color=black!60,
		align=center
	},
	mapEdge/.style={
		->,
		>={Stealth[length=2mm]},
		thin,
		color=black
	},
	implicitEdge/.style={
		->,
		>={Stealth[length=2mm]},
		thin,
		densely dashed,
		color=black
	},
	inverseEdge/.style={
		->,
		>={Stealth[length=1.6mm]},
		very thin,
		dashed,
		color=black!60
	},
	every edge quotes/.style={
		auto,
		font=\small\itshape,
		inner sep=2pt
	}
	]
	
	%
	%
	\node[varNode]                        (alpha) {$\alpha$};
	\node[descNode, left=-0.2cm of alpha, xshift=-0.3cm] {First\\ bang-duration};
	\node[varNode, right=of alpha]        (L)     {$L$};
	\node[descNode, right=-0.0cm of L]            {Step size};
	\node[varNode, below=of alpha]        (Ts)    {$T_s$};
	\node[descNode, left=-0.0cm of Ts]           {Sync.\ time};
	\node[varNode, below=of L]            (T)     {$T$};
	\node[descNode, right=-0.0cm of T]            {Maneuver time};
	
	\draw[mapEdge] (alpha) to node[above]          {$w$}   (L);
	\draw[mapEdge] (alpha) to node[left]           {$2g$}  (Ts);
	\draw[mapEdge] (alpha) to node[above] {$f$}   (T);
	
	\draw[implicitEdge] (L) to node[right] {$h$} (T);
	
	\draw[mapEdge] (T) to node[above] {$\phi$} (Ts);
	
	
\end{tikzpicture}
	\vspace{-1.0em}
	\caption{Commutative diagram of the time-optimal synthesis.
		$\alpha$ determines $L$, $T$, and $T_{s}$. $\varphi$ relates
		$T$ to $T_{s}$ via the closure
		(Theorem~\ref{thm:pythagorean}), and $h = f\circ w^{-1}$
		solves $L \mapsto T$, the only arrow requiring inversion. Its
		inverse is closed form (Corollary~\ref{cor:kepler}).}
	\label{fig:commutative}
\end{figure}



\label{sec:TOC_law}

\begin{theorem}\label{thm:analytical_sol}
	Consider Problem~\ref{prob:TOC} with step size
	$L := |\bar{x}_1|$, and let $f$, $g$, $w$, $h$ be as
	defined in~\eqref{eq:g_def}, \eqref{eq:f_def},
	\eqref{eq:w_def}, and~\eqref{eq:h_def}.
%
The unique time-optimal control is the type-1 canonical control ($k^* = 1$), and the following holds.
	\begin{enumerate}
		
		\item \emph{Synthesis via scalar inversion.}
		The optimal boost-bang duration is
		\mbox{$\alpha^* = w^{-1}(L)$}, the unique solution of
		\mbox{$w(\alpha) = L$}. The minimum maneuver
		and synchronization times are $T^* = f(\alpha^*)$
		and $T_s^* = 2g(\alpha^*)$. The optimal control
		law is
		
		\begin{equation}\label{eq:u_star}
			\hspace{-0.8em}u^*(\tau) = \operatorname{sgn}(\bar{x}_1)
			\cdot
			\begin{cases}
				-1, & \tau \in [-T^*\!/2,\; -T_s^*/2], \\
				+1, & \tau \in (-T_s^*/2,\; 0], \\
				-1, & \tau \in (0,\; T_s^*/2], \\
				+1, & \tau \in (T_s^*/2,\; T^*\!/2],
			\end{cases}
		\end{equation}
		%
		%
		%
		%
		or $u^*(\tau) = \operatorname{sgn}(\bar{x}_1)\,  
		\operatorname{sgn}\bigl(\tau (|\tau|-T_s^*/2)\bigr)$ a.e.\smallskip

		\item \emph{Minimum-time curve.} The pair $(L, T)$
		is achievable if and only if $T \geq h(L)$, with equality if and only if the maneuver is time-optimal.
		The boundary curve $T = h(L)$ admits the closed form  (Fig.~\ref{fig:perf_boundary})
		%
		\begin{equation}\label{eq:boundary_parametric}
			\bigl(L,\, T\bigr)
			= \bigl(w(\alpha),\, f(\alpha)\bigr),
			\qquad \alpha \geq 0.
		\end{equation}
		%
	\end{enumerate}
\end{theorem}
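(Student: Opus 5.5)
The proof assembles Lemmas~\ref{lem:canonical_structure}, \ref{lem:f_w_properties} and~\ref{lem:dominance}. Existence of a time-optimal control follows from controllability and compactness of the input set~\cite{athans2006}, and normality gives uniqueness. Fix $L>0$; the case $L=0$ is trivial with $\alpha^*=0$ and $T^*=0$. By Lemma~\ref{lem:canonical_structure}, the optimal control is the type-$k^*$ canonical control with boost duration $\alpha^*$, satisfying $\tilde w(\alpha^*;k^*)=L$ and $T^*=\tilde f(\alpha^*;k^*)$.

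\emph{Step 1: $k^*=1$.} Suppose instead that $k^*\geq 2$. Apply Lemma~\ref{lem:dominance}\ref{itm:step_dominance} at $T=T^*$, with $\alpha_k=\alpha^*$ and $\alpha_1=f^{-1}(T^*)$; the latter exists by Lemma~\ref{lem:f_w_properties}. This gives $w(\alpha_1)>L$. Since $w$ is a strictly increasing bijection, $w^{-1}(L)<\alpha_1$, and because $f$ is strictly increasing, $h(L)=f(w^{-1}(L))<f(\alpha_1)=T^*$. The type-1 canonical control with $\alpha=w^{-1}(L)$ is admissible: its RB part closes because $w(\alpha)=L$ by the computation in the proof of Lemma~\ref{lem:canonical_structure}, and its HO part closes by the midpoint-closure construction together with odd symmetry. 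It therefore reaches the origin in time $h(L)<T^*$, contradicting optimality. Hence $k^*=1$. Then $w(\alpha^*)=L$, so $\alpha^*=w^{-1}(L)$, and this value is unique by strict monotonicity of $w$. Substituting into Definition~\ref{def:canonical} gives $T^*=f(\alpha^*)$, $T_s^*=2g(\alpha^*)$, and the control law~\eqref{eq:u_star}. The compact sign formula is a direct check of the four intervals.

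\emph{Step 2: minimum-time curve.} Step~1 gives $T^*(L)=h(L)$. If $T\geq h(L)$, the pair $(L,T)$ is achievable: I expect one can pad the optimal maneuver by remaining at rest, applying $u=0$ at the origin for $T-h(L)$. Since $|u|\leq1$ permits $u=0$, this needs no further argument. Conversely, $T<h(L)$ contradicts minimality. For the equality clause, $T=h(L)$ means the maneuver attains the minimum time and so is time-optimal, and the converse is the definition of $h$. The parametric form~\eqref{eq:boundary_parametric} follows by setting $L=w(\alpha)$, which gives $h(L)=f(\alpha)$, with $\alpha$ ranging over $\mathbb{R}_{\geq0}$ because $w$ is a bijection.

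The main obstacle is Step~1's admissibility claim, namely that the type-1 control with $\alpha=w^{-1}(L)$ truly transfers the initial state to the origin. Here the HO closure must be checked through odd symmetry: the first half ends on the $x_4$-axis at $Q$, and the mirrored second half returns to the origin. This amounts to (P1)/(P2) applied in reverse, so I would state it explicitly via the reflection $x_3\mapsto -x_3$ combined with time reversal.
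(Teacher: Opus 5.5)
Your proposal is correct and follows the paper's proof. It uses the same contradiction via Lemma~\ref{lem:dominance}\textup{(ii)} and the monotonicity of $f$ and $w$ (equivalently of $h$) to force $k^*=1$, then reads off $\alpha^*=w^{-1}(L)$, $T^*=f(\alpha^*)$, $T_s^*=2g(\alpha^*)$ and the parametric boundary curve. You also explicitly check two steps the paper only asserts: that the competing type-1 control actually reaches the origin (rigid body via the area computation, oscillator via the reflection--time-reversal symmetry under the odd input), and that any $T>h(L)$ is achievable by resting at the origin with $u=0$.
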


\begin{proof}
	
	We first prove \eqref{eq:boundary_parametric}, as item~\textup{(i)} uses it. Substituting $L = w(\alpha)$ into $T = h(L) = f(w^{-1}(L)) = f(\alpha)$ gives~\eqref{eq:boundary_parametric}.

	\emph{Item~\textup{(i)}:}	By Lemma~\ref{lem:canonical_structure}, the
	unique time-optimal control is a canonical control $u(\,\cdot\,;\, \alpha^*,\, \bar{x}_1,\, k^*)$ for some $\alpha^* \geq 0$ and $k^* \in \mathbb{N}_{\geq 1}$. Suppose, for contradiction, that $k^* \geq 2$.
	Let $\alpha_1 \coloneqq f^{-1}(T^*)$ be the first switching time
	of the type-1 control with maneuver time~$T^*$ ($f$ bijective,
	Lemma~\ref{lem:f_w_properties}). By
	Lemma~\ref{lem:dominance}\textup{(ii)}, this control attains
	a strictly larger step size than the optimal
	control: $\hat{L} \coloneqq w(\alpha_1)
	> \tilde{w}(\alpha^*;\, k^*) = L$,	where the final equality follows from the reachability condition (Lemma~\ref{lem:canonical_structure}\textup{(ii)}). 
%
%
	The type-1 control with displacement $L$ then has maneuver time $h(L) = f(w^{-1}(L))$. Since $h$ is strictly increasing and	$h(\hat{L}) = f(f^{-1}(T^*)) = T^*$, the ordering $L < \hat{L}$ yields $h(L) < h(\hat{L}) = T^*$ (cf.~Fig.~\ref{fig:perf_boundary}).
	This control completes the maneuver in time	$h(L) < T^*$, contradicting the minimality of~$T^*$. 
	Hence $k^{*} = 1$, and Lemma~\ref{lem:canonical_structure}(ii) gives $\alpha^{*} = w^{-1}(L)$, by bijectivity of $w$ (Lemma~\ref{lem:f_w_properties}). 
	Then $T^{*} = f(\alpha^{*})$ and $T_s^{*} = 2 g(\alpha^{*})$ follow from~\eqref{eq:f_def} and Definition~\ref{def:canonical}.
	The control law~\eqref{eq:u_star} is then the type-1 canonical control evaluated at	$\alpha^*$.
	
	\emph{Item~\textup{(ii)}:}\;
	By item~\textup{(i)}, $T^* = f(\alpha^*) = h(L)$ is the
	minimum time, so $(L, T)$ is achievable iff
	$T \geq h(L)$.
\end{proof}

\begin{corollary}
\label{cor:max_displacement}
Given a time budget $T > 0$, the maximum reachable displacement is
$L = h^{-1}(T) = w\bigl(f^{-1}(T)\bigr)$, where $h$ is the
minimum-time map of Theorem~\ref{thm:analytical_sol}.
\end{corollary}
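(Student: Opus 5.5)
The plan is to read the corollary off the minimum-time curve of Theorem~\ref{thm:analytical_sol}(ii), using the monotonicity of $h$ from Lemma~\ref{lem:f_w_properties}. Since $h = f\circ w^{-1}$ is a continuous, strictly increasing bijection of $\mathbb{R}_{\geq 0}$, its inverse exists and is given by $h^{-1} = w\circ f^{-1}$. This establishes the closed form on the right-hand side, and it only uses that the inverse of a composition of bijections is the reversed composition of their inverses.

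Next I show that $\hat L := h^{-1}(T)$ is reachable within the budget $T$. By Theorem~\ref{thm:analytical_sol}(ii), the pair $(\hat L, T)$ is achievable because $T = h(\hat L)$. Concretely, the type-1 canonical control with $\alpha = f^{-1}(T)$ transfers a step of size $w(\alpha) = \hat L$ in time exactly $T$.

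Then I show that no larger displacement is reachable. Suppose some admissible control moves a step $L' > \hat L$ in time at most $T$. By Theorem~\ref{thm:analytical_sol}(ii), achievability forces $T \geq h(L')$. Strict monotonicity of $h$ gives $h(L') > h(\hat L) = T$, which is a contradiction. Hence $\hat L$ is the maximum.

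There is one interpretive point I would settle first, and it is the only likely obstacle. The phrase ``reachable with time budget $T$'' should mean completing the rest-to-rest maneuver in time at most $T$. Theorem~\ref{thm:analytical_sol}(ii) already characterizes achievable pairs by the inequality $T \geq h(L)$, so the ``at most'' reading is covered. Equivalently, one can observe that if a maneuver finishes early at time $T' < T$, the state sits at rest at the origin with $u$ unconstrained, so one could pad, but that requires $u=0$. It is cleaner to rely directly on the inequality form of item~(ii). Everything else is routine.
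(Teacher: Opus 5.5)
Your proposal is correct and matches the paper's proof, which is the one-line equivalence $T \geq h(L) \iff L \leq h^{-1}(T) = w(f^{-1}(T))$ from Theorem~\ref{thm:analytical_sol}(ii) and strict monotonicity of $h$; your reachability and maximality steps are the two directions of that equivalence written out. Your aside that padding ``requires $u=0$'' is not actually an obstacle. Problem~\ref{prob:top} allows any $|u|\leq 1$, so $u\equiv 0$ is admissible and holds the state at the equilibrium $\mathbf{0}$, and either reading of ``budget'' (exact time $T$ or time at most $T$) gives the same result.
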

\begin{proof}
	By Theorem~\ref{thm:analytical_sol} and strict monotonicity\\ 
	$T~\geq~h(L)\iff L \leq h^{-1}(T)= w \circ f^{-1}(T)$.
\end{proof}
%
%


\section{The Flexibility Penalty}
\label{sec:flexibility_penalty}

Flexibility imposes a temporal and a spatial cost. The rigid-body double integrator traverses a displacement~$L$ in minimum time~$T_r \coloneq 2\sqrt{L}$. The flexible system needs strictly more except at a discrete family of \emph{natural motions} (\emph{temporal penalty}), and for small~$L$ its excursions exceed~$L$ itself (\emph{spatial penalty}). We quantify both in closed form and refer to them collectively as the \emph{flexibility penalty}. 
Throughout, $T$ and $T_{s} = 2g(\alpha^{*})$ denote the optimal
maneuver and synchronization times, with $T-T_s=2\alpha\s>0$, $g$ from~\eqref{eq:g_def}
and $\alpha^{*} = w^{-1}(L)$ the boost-bang duration of the optimal
type-1 control (Theorem~\ref{thm:analytical_sol}). By Theorem~\ref{thm:analytical_sol}, $\alpha^*$ satisfies
the midpoint condition~\eqref{eq:midpoint} and the reachability condition \eqref{eq:w_tilde}.
\subsection{The Pythagorean Identity}
\label{sec:pythagorean}

\begin{theorem}
	\label{thm:pythagorean}
	For a common step size $L > 0$, the minimum times of the rigid
	body, $T_{r} = 2\sqrt{L}$, and of the flexible
	system~\eqref{eq:x_dyn}, $T$, satisfy
	\begin{alignat}{2}
		&\textit{Cost:}
		&\qquad T^{2}
		&\;=\; T_{r}^{\,2} + 2\,T_{s}^{\,2},
		\label{eq:pythag}\\[4pt]
		&\textit{Closure:}
		&\qquad \sin^{2}({T}/{4})
		&\;=\; 2\sin^{2}({T_{s}}/{4}).
		\label{eq:closure}
	\end{alignat}
	Given $L$, these two equations have a unique solution $(T, T_{s})$ with $T_{s} \in [0, \pi]$, which is therefore the optimal pair. The closure alone is independent of $L$.
\end{theorem}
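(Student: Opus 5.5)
The plan is to read both identities off the type-1 parametrization from Theorem~\ref{thm:analytical_sol}, and then to prove uniqueness by reducing the two-equation system to the one-variable monotone map already studied in Lemma~\ref{lem:f_w_properties}.

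\emph{Cost identity.} By Theorem~\ref{thm:analytical_sol}, the optimal pair is $T = f(\alpha^{*})$ and $T_{s} = 2g(\alpha^{*})$ with $\alpha^{*} = w^{-1}(L)$. The definition~\eqref{eq:w_def} gives
\[
L = w(\alpha^{*}) = \tfrac{T^{2}}{4} - 2g(\alpha^{*})^{2} = \tfrac{T^{2}}{4} - \tfrac{T_{s}^{2}}{2}.
\]
Multiplying by $4$ and using $T_{r}^{2} = 4L$ yields~\eqref{eq:pythag}.

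\emph{Closure identity.} The closure~\eqref{eq:closure} is exactly the relation $2\sin^{2}(T_{s}/4) = \sin^{2}(T/4)$. This was derived in the proof of Lemma~\ref{lem:sync_bound} from~\eqref{eq:closure-cos}, which in turn is the midpoint condition~\eqref{eq:midpoint} evaluated along the type-1 family. Since $\alpha^{*}$ lies on that family, \eqref{eq:closure} holds at the optimum. No $L$ enters this derivation, which gives the independence remark.

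\emph{Uniqueness.} Suppose $(T, T_{s})$ solves both equations with $T_{s} \in [0,\pi]$. The steps are as follows.
\begin{enumerate}
\item Since $T_{s}/4 \in [0,\pi/4]$, taking square roots in~\eqref{eq:closure} recovers the branch formula~\eqref{eq:Ts_branch}. Hence $T_{s} = \varphi(T)$ is a single-valued continuous function of $T$.
\item Substituting into~\eqref{eq:pythag} reduces the system to the scalar equation $\Lambda(T) := T^{2}/4 - \varphi(T)^{2}/2 = L$ with $T \ge 0$. One should also observe that~\eqref{eq:pythag} forces $T \ge T_{r} > 0$.
\item The proof of Lemma~\ref{lem:f_w_properties} showed that $\Lambda$ is continuous and strictly increasing on $\mathbb{R}_{\ge 0}$. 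Concretely, \eqref{eq:dL_dT} together with Lemma~\ref{lem:sinc_gap} gives a positive derivative on each open interval $(4m\pi, 4(m+1)\pi)$, and continuity across the breakpoints $4m\pi$ extends strict monotonicity to the whole half-line.
\item Therefore $\Lambda(T) = L$ has at most one solution. The optimal pair is one solution, so it is the only one.
\end{enumerate}

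\emph{Main obstacle.} The delicate point is the monotonicity of $\Lambda$ on all of $\mathbb{R}_{\ge 0}$, not merely along the image of $f$. This is handled because $f$ is a bijection onto $\mathbb{R}_{\ge 0}$ (Lemma~\ref{lem:f_w_properties}), so every $T \ge 0$ equals $f(\alpha)$ for some $\alpha$, and then $\Lambda(T) = w(\alpha)$. Consequently $\Lambda = w \circ f^{-1} = h^{-1}$, which is strictly increasing. With that identification the uniqueness argument closes.
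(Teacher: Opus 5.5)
Your proposal is correct and follows the paper's proof essentially step for step: you obtain the cost identity from \eqref{eq:w_def} at $\alpha^{*}$ and the closure from the midpoint condition along the type-1 family. You then prove uniqueness by eliminating $T_{s}$ through the closure branch with $T_{s}\in[0,\pi]$ and invoking the strict monotonicity \eqref{eq:dL_dT} of the resulting scalar function of $T$. The only cosmetic differences are that you invert the closure via the arcsin branch \eqref{eq:Ts_branch} instead of the paper's arccos form, and that you add the valid identification of that scalar function with $h^{-1} = w\circ f^{-1}$.
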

\begin{proof}
	\emph{Cost.}\;
	Eq.~\eqref{eq:w_def} at $\alpha = \alpha^{*}$ reads
	$4L = T^{2} - 2T_{s}^{\,2}$; since $T_{r}^{\,2} = 4L$, this
	is~\eqref{eq:pythag}.
	
	\emph{Closure.}\;
	The midpoint condition~\eqref{eq:midpoint} rearranges to~\eqref{eq:closure-cos}. Solving for $T_{s}$
	gives~\eqref{eq:Ts_branch}, and squaring
	$\sin(T_{s}/4) = |\sin(T/4)|/\sqrt{2}$ gives~\eqref{eq:closure}.
	
	
	
	\emph{Uniqueness.}\;
	Since $T_{s} \in [0,\pi]$ puts $T_{s}/2$ in $[0, \pi/2]$, where
	$\cos$ is injective, the closure~\eqref{eq:closure-cos} gives
	$T_{s} = 2\arccos(\cos^{2}(T/4))$. Substituting
	into~\eqref{eq:pythag} gives~\eqref{eq:kepler}, whose left side
	is strictly increasing in $T$ by~\eqref{eq:dL_dT}. At most
	one pair $(T, T_{s})$ satisfies both equations, and by the first
	two parts the optimal pair does.
\end{proof}
%
%

\begin{figure}
	\centering
	\footnotesize
	\def\svgwidth{0.85\linewidth}
	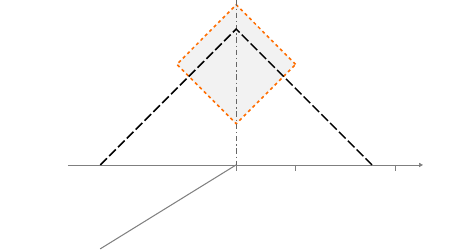
	\caption{Visual display of the Pythagorean
		identity~\eqref{eq:pythag}: flexible (solid) and
		rigid-body (dashed) velocity profiles enclose the same
		displacement~$L$. Equating areas yields the triangle.}
	\label{fig:pyth_ident}
\end{figure}
\begin{rem}
	The flexible velocity profile $x_{2}(\tau)$	encloses the displacement $L$ as an isosceles triangle of base $T$ minus a square of diagonal $T_{s}$, the geometric signature of the synchronization phase (Fig.~\ref{fig:pyth_ident}). The same $L$ is enclosed by the rigid-body
	triangle of base $T_{r} = 2\sqrt{L}$. Equating areas yields the right triangle with legs $T_{r}/2$	and $\sqrt{2}\,T_{s}/2$ and hypotenuse $T/2$; cf.~\eqref{eq:pythag}.
\end{rem}


\subsection{Time-Displacement Law}
Theorem~\ref{thm:pythagorean} gives a closed-form
time-displacement law. It defines $h^{-1}\colon T \mapsto L$
explicitly, so computing the minimum time $h(L)$ is a scalar
root-find on a strictly increasing function.
\begin{corollary}\label{cor:kepler}
	$T$ satisfies the time-displacement law
	\begin{equation}\label{eq:kepler}
		(T/2)^{2}
		- 2\arccos^{2}\!\big(\cos^{2}(T/4)\big) = L,
	\end{equation}
	the flexible analogue of the rigid-body law
	$(T_{r}/2)^{2} = L$. The subtracted term is
	$T_{s}^{2}/2 \in [0, \pi^{2}/2]$.
	%
\end{corollary}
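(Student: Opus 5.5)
The plan is to combine the two relations of Theorem~\ref{thm:pythagorean}. First, take the cost identity~\eqref{eq:pythag} and divide by $4$. Using $T_r^2 = 4L$, it rewrites as $L = (T/2)^2 - T_s^2/2$. The task is then to express $T_s$ as an explicit function of $T$ through the closure, and to check that the subtracted term has the stated range.

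For the closure, use the form~\eqref{eq:closure-cos}, $\cos(T_s/2) = \cos^2(T/4)$, rather than the sine form~\eqref{eq:closure}. Lemma~\ref{lem:sync_bound} gives $T_s\in[0,\pi]$, so $T_s/2\in[0,\pi/2]$. On this interval $\cos$ is injective and $\arccos$ is its inverse. The argument $\cos^2(T/4)$ lies in $[0,1]$, so $\arccos$ is defined and returns a value in $[0,\pi/2]$, which matches the required branch. Hence $T_s/2 = \arccos\bigl(\cos^2(T/4)\bigr)$ and $T_s^2/2 = 2\arccos^2\bigl(\cos^2(T/4)\bigr)$. Substituting this into $L = (T/2)^2 - T_s^2/2$ gives~\eqref{eq:kepler}.

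The range claim follows from the same bound: $T_s\in[0,\pi]$ gives $T_s^2/2\in[0,\pi^2/2]$. Setting the synchronization term to zero recovers the rigid-body law $(T_r/2)^2 = L$, which is the analogy the statement draws.

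The main point needing care is the branch selection for $\arccos$. It is settled by the a priori bound $g\in[0,\pi/2]$ of Lemma~\ref{lem:sync_bound}, which holds at the optimum because $k^*=1$ (Theorem~\ref{thm:analytical_sol}). One could add that the left side of~\eqref{eq:kepler} is strictly increasing in $T$ by~\eqref{eq:dL_dT}. That monotonicity is what makes the law a well-posed definition of $h^{-1}$, but the corollary itself only asserts the identity.
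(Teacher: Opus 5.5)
Your proposal is correct and takes the paper's approach: rewrite \eqref{eq:pythag} as $L=(T/2)^2-T_s^2/2$, then substitute $T_s=2\arccos\bigl(\cos^2(T/4)\bigr)$. That inversion is legitimate on the branch $T_s/2\in[0,\pi/2]$ guaranteed by Lemma~\ref{lem:sync_bound}. The paper's one-line proof cites the sine form \eqref{eq:closure}, but the inversion it needs is the cosine form \eqref{eq:closure-cos}, exactly as you use it and as the uniqueness part of Theorem~\ref{thm:pythagorean} does.
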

\begin{proof}
	Substituting \eqref{eq:closure} into \eqref{eq:pythag} gives \eqref{eq:kepler}.
\end{proof}

\begin{figure}
	\centering
	\footnotesize
	\def\svgwidth{1.0\linewidth}
	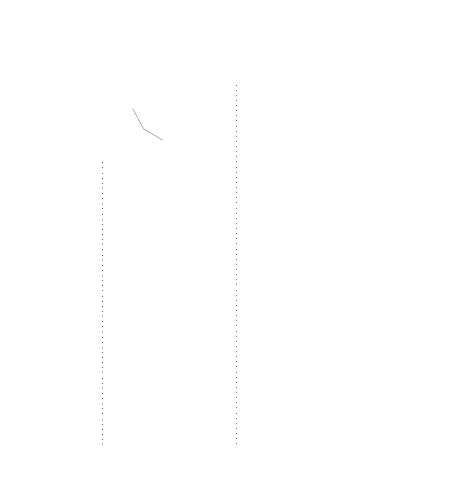
	\caption{
		Top: maneuver time $T$, rigid-body minimum $T_{r}$, and synchronization time $T_{s}$ vs. step size $L$. Minimum-time map $T = h(L)$ (Corollary~\ref{cor:kepler}) is the rigid-body parabola plus a bounded correction vanishing at natural motions (Corollary~\ref{cor:natural_anti}). Bottom: time penalty $T/T_{r}$ against envelope~\eqref{eq:envelope}.
		%
	}
	\label{fig:perf_boundary}
\end{figure}

\subsection{Natural Motions and Maneuver-Time Bounds}
\label{sec:bounds}
\begin{corollary}\label{cor:natural_anti}
	$T_{s}$ attains its bounds exactly when the maneuver spans a
	whole number of oscillator periods, $T = 2\pi m$,
	$m \in \mathbb{N}_{0}$, with $L = \pi^{2}m^{2} - T_{s}^{2}/2$:
	\begin{enumerate}
		\item even $m = 2n$: $T_{s} = 0$ and
		$L = L_{n}^{\textup{nat}} := (2n\pi)^{2}$, here $T = T_{r}$
		and~\eqref{eq:u_star} reduces to the rigid-body optimum, a single switch. These are the \emph{natural
			motions}~\cite{keppler2020}.
		\item odd $m = 2n+1$, $T_{s} = \pi$ and
		$L = L_{n}^{\textup{anti}} :=
		\pi^{2}\bigl[(2n+1)^{2} - \tfrac{1}{2}\bigr]$, the
		\emph{anti-resonances}.
	\end{enumerate}
	The families interlace:	$L_{n}^{\textup{nat}} < L_{n}^{\textup{anti}} < L_{n+1}^{\textup{nat}}$.
\end{corollary}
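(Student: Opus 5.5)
The natural route is through the closed-form branch \eqref{eq:Ts_branch} of Lemma~\ref{lem:sync_bound}, $T_s = 4\arcsin\bigl(\tfrac{1}{\sqrt2}|\sin(T/4)|\bigr)$, which holds at the optimum by Theorem~\ref{thm:pythagorean}. Since $\arcsin$ is strictly increasing on $[0,1]$ and $|\sin(T/4)|\in[0,1]$, the map $T\mapsto T_s$ attains its lower bound $T_s=0$ exactly when $\sin(T/4)=0$, i.e.\ $T=4n\pi$. It attains its upper bound $T_s=4\arcsin(1/\sqrt2)=\pi$ exactly when $|\sin(T/4)|=1$, i.e.\ $T=(4n+2)\pi$. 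Together these give precisely $T=2\pi m$, with $m$ even for the lower bound and $m$ odd for the upper bound. For the displacement I will invoke the Pythagorean identity \eqref{eq:pythag}, $L=T^2/4-T_s^2/2$, which at $T=2\pi m$ reads $L=\pi^2m^2-T_s^2/2$.

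In the even case $m=2n$ we get $T_s=0$, so $L=(2n\pi)^2$ and $T^2=T_r^2$, i.e.\ $T=T_r$. For the control, $T_s=0$ means $g(\alpha^*)=0$, which by Lemma~\ref{lem:sync_bound} forces $\alpha^*=2n\pi$. In \eqref{eq:u_star} the two central intervals $(-T_s^*/2,0]$ and $(0,T_s^*/2]$ then become degenerate, so $u^*=\operatorname{sgn}(\bar x_1)\operatorname{sgn}(\tau)$ a.e. This is a single switch at $\tau=0$, which is the rigid-body bang-bang law. In the odd case $m=2n+1$ we get $T_s=\pi$ and $L=\pi^2(2n+1)^2-\pi^2/2$, as stated.

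One subtlety is that the statement is phrased in terms of $L$, while the natural variable above is $T$. To pass between them I will use that $h$ is a strictly increasing bijection (Theorem~\ref{thm:analytical_sol}, Lemma~\ref{lem:f_w_properties}). Hence each $T=2\pi m$ corresponds to exactly one $L$, and the step sizes at which $T_s$ hits its bounds are exactly these $L$-values. This correspondence is the only step needing care.

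For the interlacing I will show
\[
(2n\pi)^2<\pi^2\bigl[(2n+1)^2-\tfrac12\bigr]<(2n+2)^2\pi^2 .
\]
Dividing by $\pi^2$, the left inequality reduces to $4n^2<4n^2+4n+\tfrac12$, which holds for $n\ge0$. The right inequality reduces to $4n+\tfrac12<8n+4$, which also holds. A consistency check that follows automatically from monotonicity of $h$: $T=4n\pi<(4n+2)\pi<(4n+4)\pi$ already implies the same ordering of the corresponding $L$-values.
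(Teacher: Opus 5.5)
Your proposal is correct and follows essentially the paper's proof. The paper locates the bounds through the closure $\cos(T_s/2)=\cos^2(T/4)$ rather than its solved form \eqref{eq:Ts_branch}, which is equivalent; it then gets $L$ from the Pythagorean identity, collapses the inner intervals of \eqref{eq:u_star} at $T_s=0$, and obtains the interlacing from the bound that consecutive $L$-values differ by at least $\pi^2(2m+1)-\pi^2/2>0$, where you instead verify the two inequalities directly (and your explicit use of the bijectivity of $h$ to pass between $T$ and $L$ makes precise a step the paper leaves implicit).
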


\begin{proof}
	By~\eqref{eq:closure-cos} and $T_{s} \in [0,\pi]$
	(Lemma~\ref{lem:sync_bound}), $T_{s} = 0$ at $T = 4\pi m$ and
	$T_{s} = \pi$ at $T = 2\pi(2m+1)$, $m \in \mathbb{N}_{0}$. Both
	read $T = 2\pi m$ with $m$ even and odd, respectively.
	Substituting into~\eqref{eq:pythag} gives
	$L = \pi^{2}m^{2} - T_{s}^{2}/2$, hence $L_{n}^{\textup{nat}}$ for
	$m = 2n$ and $L_{n}^{\textup{anti}}$ for $m = 2n+1$. At a natural
	motion $T = 4n\pi = 2\sqrt{L_{n}^{\textup{nat}}} = T_{r}$, and
	$T_{s} = 0$ collapses the inner intervals of~\eqref{eq:u_star},
	leaving the single sign change at $\tau = 0$. Consecutive $L$
	differ by at least $\pi^{2}(2m+1) - \pi^{2}/2 > 0$, giving the
	interlacing.
\end{proof}

\begin{corollary}\label{cor:bounds}
	For every $L > 0$,
	\begin{equation}\label{eq:envelope}
		2\sqrt{L}
		\;\leqslant\; T
		\;\leqslant\; 2\sqrt{L + \pi^{2}/2}\,;
	\end{equation}
	the bounds are the rigid-body times for	displacements $L$ and $L + \pi^{2}/2$, which are attained at the natural motions and anti-resonances of	Corollary~\ref{cor:natural_anti}, respectively (Fig.~\ref{fig:perf_boundary}).
\end{corollary}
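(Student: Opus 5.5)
The plan is to read both bounds directly off the Pythagorean identity~\eqref{eq:pythag} of Theorem~\ref{thm:pythagorean}, using the a priori range of the synchronization time from Lemma~\ref{lem:sync_bound}.

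First, write~\eqref{eq:pythag} as $T^{2} = 4L + 2T_{s}^{2}$, where $T_{s} = 2g(\alpha^{*})$ is the optimal synchronization time. Lemma~\ref{lem:sync_bound} gives $T_{s}\in[0,\pi]$, hence $0\leqslant 2T_{s}^{2}\leqslant 2\pi^{2}$. This yields $4L\leqslant T^{2}\leqslant 4L+2\pi^{2} = 4(L+\pi^{2}/2)$. Taking nonnegative square roots, which is legitimate because $T>0$, gives~\eqref{eq:envelope}. Each bound is $2\sqrt{\cdot}$ of a displacement, so it is precisely the rigid-body minimum time $T_{r}$ for the step $L$ or $L+\pi^{2}/2$, respectively.

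Second, establish sharpness. Equality on the left holds if and only if $T_{s}=0$. Equality on the right holds if and only if $T_{s}=\pi$. Corollary~\ref{cor:natural_anti} characterizes exactly these cases: $T_{s}=0$ occurs iff $L=L_{n}^{\mathrm{nat}}=(2n\pi)^{2}$ with $n\geqslant 1$, since $L>0$ excludes $n=0$, and $T_{s}=\pi$ occurs iff $L=L_{n}^{\mathrm{anti}}$. Both families are nonempty, so each bound is attained infinitely often and cannot be improved by any constant. As a consistency check, at $L=L_{n}^{\mathrm{anti}}$ one has $L+\pi^{2}/2=\pi^{2}(2n+1)^{2}$, so the upper bound equals $2\pi(2n+1)=T$, matching Corollary~\ref{cor:natural_anti}(ii).

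There is essentially no obstacle, since the inequality itself is a one-line consequence of~\eqref{eq:pythag}. The only point needing care is the ``exactly when'' direction of attainment. That direction depends on $T_{s}\in\{0,\pi\}$ forcing $T\in 2\pi\mathbb{N}$, which follows from~\eqref{eq:closure-cos}: $\cos(T_{s}/2)\in\{1,0\}$ forces $\cos^{2}(T/4)\in\{1,0\}$. This is already contained in the proof of Corollary~\ref{cor:natural_anti}, which I will invoke rather than rederive.
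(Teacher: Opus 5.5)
Your proposal is correct and follows the paper's proof: you bound $T_s\in[0,\pi]$ via Lemma~\ref{lem:sync_bound} and take square roots of the Pythagorean identity~\eqref{eq:pythag}. Your sharpness paragraph reads the equality cases $T_s=0$ and $T_s=\pi$ off Corollary~\ref{cor:natural_anti}, which spells out the attainment claim that the paper's one-line proof leaves to that corollary.
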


\begin{proof}
	By Lemma~\ref{lem:sync_bound}, $T_{s} \in [0, \pi]$, so taking the square roots of \eqref{eq:pythag} yields~\eqref{eq:envelope}.
\end{proof}
%

\subsection{Small- and Large-Maneuver Limits}
\label{sec:scaling}
Flexibility is arbitrarily expensive for small maneuvers, yet
asymptotically free for large ones.
\begin{proposition}\label{prop:scaling}
	The time ratio $T/T_{r} \to \infty$ as $L \to 0$ and $T/T_{r} \to 1$ as
	$L \to \infty$, with leading-order scaling
	\begin{equation*}
		T = c\,L^{1/4} + O(L^{3/4})
		\quad\text{as } L \to 0,
		\qquad c := 384^{1/4}.
	\end{equation*}
	The quadruple integrator's time-optimal rest-to-rest law is $T_{4}(L) = c\,L^{1/4}$, with $c$ as above, so $T/T_{4} \to 1$ as $L \to 0$.
\end{proposition}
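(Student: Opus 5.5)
The plan is a local Taylor expansion of the closed-form time-displacement law~\eqref{eq:kepler} near $T=0$. The rest of the proposition then follows from the envelope of Corollary~\ref{cor:bounds}, plus a separate computation for the quadruple integrator.

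\emph{Small-$L$ expansion.} By Theorem~\ref{thm:analytical_sol}, $h$ is a continuous, strictly increasing bijection with $h(0)=0$, so $T=h(L)\to 0$ as $L\to 0$. For small $T$, both $T/4$ and $T_s/4$ lie in $[0,\pi/2)$, so the $\arccos$ branch in~\eqref{eq:kepler} is the one fixed by $T_s\in[0,\pi]$ (Lemma~\ref{lem:sync_bound}).

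Put $x:=T/4$ and $\delta:=1-\cos^{2}x=\sin^{2}x = x^{2}-x^{4}/3+O(x^{6})$. I will use the expansion $\arccos(1-\delta)=\sqrt{2\delta}\,\bigl(1+\delta/12+O(\delta^{2})\bigr)$. Squaring it gives
\[
\arccos^{2}(1-\delta)=2\delta+\delta^{2}/3+O(\delta^{3})=2x^{2}-x^{4}/3+O(x^{6}).
\]
Substituting into~\eqref{eq:kepler}, the quadratic terms $4x^{2}-4x^{2}$ cancel, and
\[
L=\tfrac{2}{3}x^{4}+O(x^{6})=\tfrac{T^{4}}{384}\bigl(1+O(T^{2})\bigr).
\]
Inverting gives $T=384^{1/4}L^{1/4}\bigl(1+O(T^{2})\bigr)$. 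Since $T=O(L^{1/4})$, the factor $O(T^{2})$ is $O(L^{1/2})$, and therefore $T=cL^{1/4}+O(L^{3/4})$.

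The main obstacle is the bookkeeping of the cancellation. The leading $x^{2}$ terms vanish identically, so the $x^{4}$ coefficients must be tracked exactly in both $\sin^{2}x$ and the $\arccos^{2}$ expansion; the latter is best derived from $\cos\theta=1-\delta$ by series reversion. As a consistency check, the leading behaviour also follows from the closure~\eqref{eq:closure}: to lowest order $T_s\approx T/\sqrt2$, so in the Pythagorean identity~\eqref{eq:pythag} the terms $T^2$ and $2T_s^2$ cancel at leading order and $T_r^2$ is quartic in $T$.

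\emph{Ratio limits.} From the expansion, $T/T_r = cL^{1/4}/(2L^{1/2})\,(1+o(1))\to\infty$ as $L\to 0$. For $L\to\infty$, Corollary~\ref{cor:bounds} gives $1\le T/T_r\le\sqrt{1+\pi^{2}/(2L)}\to 1$.

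\emph{Quadruple integrator.} For $y^{(4)}=u$ with $|u|\le 1$, I take as known (bang-bang with three switches, odd about the midpoint, by the same symmetry arguments as (P1)--(P3)) that the optimal rest-to-rest control has the form used for the canonical family. On centered time $\sigma\in[-T/2,T/2]$, with boost duration $a$, it is $+1$ on $[-T/2,-T/2+a]$, $-1$ up to $0$, $+1$ on $[0,T/2-a]$, and $-1$ after that.

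Oddness makes the even moments $\int u$ and $\int\sigma^{2}u$ vanish automatically. The remaining boundary condition is the first-moment condition $\int\sigma u\,d\sigma=0$, which gives $2(T/2-a)^{2}=(T/2)^{2}$, that is, $a=\tfrac{T}{2}\bigl(1-\tfrac{1}{\sqrt2}\bigr)$. The displacement is then $\bigl|\tfrac16\int\sigma^{3}u\,d\sigma\bigr|$. Evaluating this elementary integral with this $a$ should give $T^{4}/384$, hence $T_4(L)=cL^{1/4}$ and $T/T_4\to 1$ as $L\to 0$.
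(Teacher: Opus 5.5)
Your proposal is correct and follows essentially the paper's route. You expand the closure-substituted Pythagorean law, which you write as $\arccos^{2}(\cos^{2}(T/4))$ and the paper writes as $T_{s}=4\arcsin(\sin\theta/\sqrt{2})$, to get the cancellation $L=\tfrac{2}{3}(T/4)^{4}+O(T^{6})$; you use the envelope for $L\to\infty$ and moment conditions for the odd three-switch quadruple-integrator control. Your centered-time moments are slightly cleaner than the paper's $[0,T]$ bookkeeping, and the integral you left unevaluated does give $\int\sigma^{3}u\,d\sigma=-T^{4}/64$, hence $L=T^{4}/384$.
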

\begin{proof}
	Set $\theta := T/4$. By~\eqref{eq:Ts_branch},
	$T_{s} = 4\arcsin\bigl(\sin\theta/\sqrt{2}\bigr)$ for small
	$\theta > 0$, so
	$T_{s} = 2\sqrt{2}\,\theta - \tfrac{\sqrt{2}}{6}\theta^{3} + O(\theta^{5})$.
	Hence $T_{s}^{2} = 8\theta^{2} - \tfrac{4}{3}\theta^{4} + O(\theta^{6})$,
	and with $T^{2} = 16\theta^{2}$ the cost~\eqref{eq:pythag} gives
	$T_{r}^{2} = T^{2} - 2T_{s}^{2}
	= \tfrac{8}{3}\theta^{4} + O(\theta^{6})
	= T^{4}/96 + O(T^{6})$.
	Since $T_{r} = 2\sqrt{L}$, this is the stated expansion
	$T^{4} = 384\,L\bigl(1 + O(\sqrt{L})\bigr)$, thus
	$T/T_{r} = \tfrac{c}{2}L^{-1/4}(1 + o(1)) \to \infty$. For
	large $L$, the envelope~\eqref{eq:envelope} gives
	$1 \leqslant T/T_{r} \leqslant \sqrt{1 + \pi^{2}/(2L)} \to 1$.
	
	For the quadruple integrator $x^{(4)} = u$, $|u| \leqslant 1$,
	the time-optimal rest-to-rest control is bang-bang with three
	switches and, by uniqueness and the invariance of the problem
	under $t \mapsto T-t$, $u \mapsto -u$, satisfies
	$u(T-s) = -u(s)$. With $m_{k} := \int_{0}^{T}\! s^{k} u\, ds$,
	repeated integration from rest gives $x^{(3)}(T) = m_{0}$,
	$x^{(2)}(T) = Tm_{0} - m_{1}$,
	$x'(T) = \tfrac{1}{2}(T^{2}m_{0} - 2Tm_{1} + m_{2})$ and
	$x(T) = \tfrac{1}{6}(T^{3}m_{0} - 3T^{2}m_{1} + 3Tm_{2} - m_{3})$.
	The first terminal condition gives $m_{0} = 0$. The substitution
	$s \mapsto T-s$ gives $m_{k} = -\int_{0}^{T}(T-s)^{k} u\, ds$;
	expanding at
	$k = 2$ and using $m_{0} = 0$ yields $m_{2} = Tm_{1}$. The second
	and third conditions both reduce to $m_{1} = 0$, and the
	fourth becomes $L = -m_{3}/6$. Integrating the four bangs with
	switches at $a$, $T/2$, $T-a$ gives
	$m_{1} = \tfrac{1}{4}\bigl(T^{2} - 8Ta + 8a^{2}\bigr)$, whose
	root below $T/2$ is $a = \tfrac{T}{4}(2-\sqrt{2})$. With
	$a + (T-a) = T$ and $a(T-a) = T^{2}/8$ this yields
	$m_{3} = -T^{4}/64$, hence $L = T^{4}/384$, i.e.\
	$T_{4}(L) = c\,L^{1/4}$.
\end{proof}


\begin{figure}
	\centering
	\tiny
	\def\svgwidth{1.0\linewidth}
	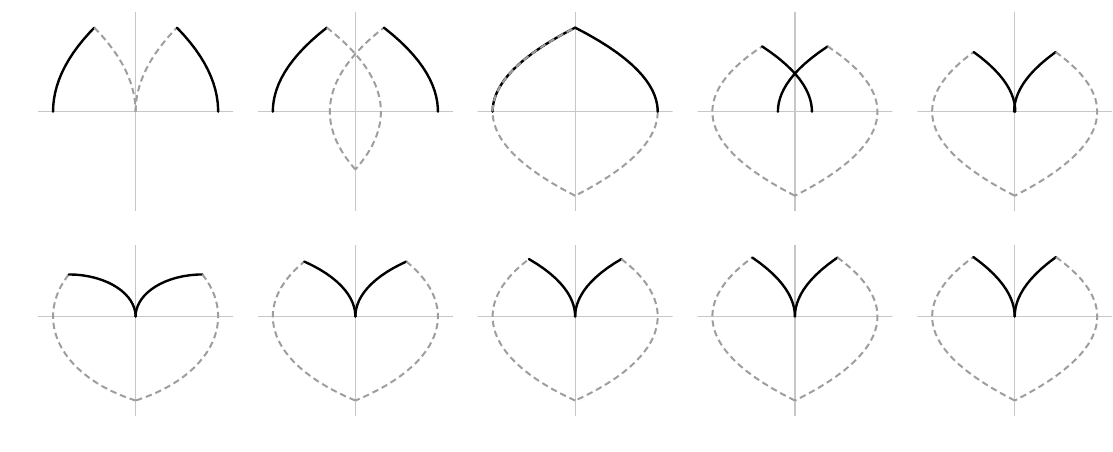
	\caption{Phase portraits for decreasing boost-bang
		duration~$\alpha$: rigid pair $(x_{1}, x_{2})$ (top, shifted
		by $-\bar{x}_{1}/2$ to center the maneuver) and
		oscillator pair $(x_{3}, x_{4})$ (bottom). $x_1,x_3$ are scaled by
		the step size~$L$, $x_2,x_4$ by~$\alpha$, and each panel is
		scaled individually, so agreement between the rows is one of
		shape, not magnitude. As $\alpha \to 0$ the two rows become
		congruent. Under each bang the rigid pair traces a parabola and
		the oscillator pair a unit circle of the same curvature,
		agreeing to third order. The fourth-order mismatch is the step
		size itself, the geometric view of $T \propto L^{1/4}$
		(Proposition~\ref{prop:scaling}).}
	\label{fig:phase_portraits}
\end{figure}

\subsection{The Spatial Penalty}
\label{sec:detour}


\begin{lemma}\label{lem:g_props}
	$g(\alpha) > \alpha$ on $(0, \pi/2)$.
\end{lemma}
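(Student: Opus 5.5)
We prove $g(\alpha) > \alpha$ on $(0,\pi/2)$ using the compact closure~\eqref{eq:closure_compact}, $\cos(\alpha+g) = 2\cos g - 1$, together with the range $g \in [0,\pi/2]$ and positivity $g>0$ on $(0,2\pi)$ from Lemma~\ref{lem:sync_bound}. The idea is to turn the claim into a sign condition on a single function of one variable, evaluated at $\gamma=\alpha$.

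Fix $\alpha \in (0,\pi/2)$ and define
\[
\Phi_\alpha(\gamma) := \cos(\alpha+\gamma) - 2\cos\gamma + 1 ,
\qquad \gamma \in [0,\pi/2].
\]
By construction $\Phi_\alpha(g(\alpha)) = 0$. The endpoints are $\Phi_\alpha(0) = \cos\alpha - 1 < 0$ and $\Phi_\alpha(\pi/2) = 1 - \sin\alpha > 0$. Its derivative is
\[
\Phi_\alpha'(\gamma) = 2\sin\gamma - \sin(\alpha+\gamma).
\]
This is exactly the quantity $D$ of Lemma~\ref{lem:f_w_properties} when $\gamma=g$, but here it must be controlled on the whole interval. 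It can change sign only once on $[0,\pi/2]$: dividing by $\sin\gamma$ gives $2 - \cos\alpha - \sin\alpha\cot\gamma$, which is increasing in $\gamma$.

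So $\Phi_\alpha$ first decreases from the negative value $\Phi_\alpha(0)$ and then increases to the positive value $\Phi_\alpha(\pi/2)$. Hence it has exactly one zero in $[0,\pi/2]$, which must be $g(\alpha)$, and $\Phi_\alpha<0$ to the left of that zero. It therefore suffices to show
\[
\Phi_\alpha(\alpha) = \cos 2\alpha - 2\cos\alpha + 1 < 0 .
\]
Since $\alpha < \pi/2$, the point $\gamma=\alpha$ lies in the domain. Using $\cos 2\alpha = 2\cos^2\alpha - 1$, we get $\Phi_\alpha(\alpha) = 2\cos\alpha(\cos\alpha - 1)$. This is strictly negative for $\alpha\in(0,\pi/2)$, because $\cos\alpha>0$ and $\cos\alpha<1$. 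Hence $\alpha < g(\alpha)$.

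The main obstacle is the uniqueness step: showing that $g(\alpha)$ is the only root of $\Phi_\alpha$ on $[0,\pi/2]$, so that $\Phi_\alpha(\alpha)<0$ really places $\alpha$ to the left of $g$. The monotonicity of $2 - \cos\alpha - \sin\alpha\cot\gamma$ handles this cleanly.

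As a consistency check, near $\alpha=0$ the expansion~\eqref{eq:g_expansion} gives $g \approx (1+\sqrt2)\alpha > \alpha$. At $\alpha=\pi/2$ the inequality becomes an equality: there $\Phi_{\pi/2}(\pi/2) = 0$, so $g(\pi/2)=\pi/2$. This explains why the interval in the statement is open at $\pi/2$.
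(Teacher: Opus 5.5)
Your proof is correct, but it takes a different route from the paper's.

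The paper argues by contradiction directly at the root. If $g\leqslant\alpha$, then $0<2g\leqslant\alpha+g<\pi$, so monotonicity of the cosine turns the closure~\eqref{eq:closure_compact} into $2\cos g-1\leqslant\cos 2g$, i.e.\ $\cos g\leqslant\cos^{2}g$. That is impossible for $g\in(0,\pi/2)$. This argument never needs $g(\alpha)$ to be the only solution of the closure in $[0,\pi/2]$; it uses only $g>0$ at the given $\alpha$.

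You instead fix $\alpha$ and vary the candidate $\gamma$. You establish the sign structure of $\Phi_\alpha$ through the increasing quotient $2-\cos\alpha-\sin\alpha\cot\gamma$, and use the range $g\in[0,\pi/2]$ from Lemma~\ref{lem:sync_bound} to identify $g(\alpha)$ as the unique root. The test point $\gamma=\alpha$ then reduces to the same elementary inequality $\cos\alpha>\cos^{2}\alpha$, evaluated at $\alpha$ rather than at $g$.

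The paper's route is shorter. Yours costs the extra uniqueness step but yields a reusable comparison principle: for $\alpha\in(0,\pi/2)$ and $\gamma\in[0,\pi/2]$, $\gamma<g(\alpha)$ if and only if $\Phi_\alpha(\gamma)<0$. For instance, $\gamma=2\alpha$ gives $\Phi_\alpha(2\alpha)=(\cos\alpha-1)(4\cos^{2}\alpha-3)$. This proves the bound $g>2\alpha$ on $(0,\pi/6)$, which Proposition~\ref{prop:detour} states without a separate argument. It also exhibits the equality case $g(\pi/6)=\pi/3$, alongside your $g(\pi/2)=\pi/2$.
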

\begin{proof}
	Suppose $g \leqslant \alpha$ at some $\alpha \in (0, \pi/2)$. There $0 < g < \pi/2$ by Lemma~\ref{lem:sync_bound}. Then
	$0 < 2g \leqslant \alpha + g < \pi$, so monotonicity of the
	cosine and the closure~\eqref{eq:closure_compact} give
	$2\cos g - 1 = \cos(\alpha + g) \leqslant \cos 2g
	= 2\cos^{2}g - 1$, i.e.\ $\cos g \leqslant \cos^{2} g$. But
	$\cos g \in (0, 1)$ for $g \in (0, \pi/2)$, so
	$\cos g > \cos^{2} g$, a contradiction.
\end{proof}

\begin{proposition}\label{prop:detour}
	The rigid-body velocity $x_2$ reverses sign within each synchronization
	interval if and only if $L < \pi^{2}/2$. Along the optimal
	maneuver,
	\begin{equation*}
		\lim_{L \to 0}\, \max_{\tau}
		{\lvert x_{i}(\tau) - x_{i}(0)\rvert}/{L}
		= \infty,
		\quad i \in \{1, 3\},
	\end{equation*}
	both ratios diverging as $L^{-1/2}$. For $L < \pi^{2}/36$ the
	rigid-body ratio ($i = 1$) exceeds one: the body moves farther
	from its start than the commanded displacement.
\end{proposition}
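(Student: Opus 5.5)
The plan is to work throughout with the explicit type-1 velocity profile from Theorem~\ref{thm:analytical_sol}. By sign symmetry, take $\bar x_1=-L<0$ and write $\alpha=\alpha^{*}$, $g=g(\alpha)=T_s/2$.

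\emph{Velocity profile and sign reversal.} Integrating~\eqref{eq:u_star} from rest gives a piecewise-linear $x_2$:
\begin{itemize}
\item it rises from $0$ to $\alpha$ on the first boost;
\item it falls with slope $-1$ on $(-g,0]$ to the value $x_2(0)=\alpha-g$;
\item it rises back to $\alpha$ on $(0,g]$;
\item it returns to $0$ on the final boost.
\end{itemize}
Hence $x_2$ changes sign inside each synchronization interval if and only if $g>\alpha$. Lemma~\ref{lem:g_props} gives $g>\alpha$ on $(0,\pi/2)$. For $\alpha>\pi/2$, Lemma~\ref{lem:sync_bound} gives $g\leqslant\pi/2<\alpha$. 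At $\alpha=\pi/2$ the closure~\eqref{eq:closure_compact} becomes $\cos 2g=2\cos g-1$, i.e.\ $\cos g(\cos g-1)=0$, so $g=\pi/2=\alpha$ and $x_2$ only touches zero. This boundary case has $T=2\pi$ and $T_s=\pi$, so~\eqref{eq:pythag} gives $L=\pi^2-\pi^2/2=\pi^2/2$. Because $w$ is a strictly increasing bijection (Lemma~\ref{lem:f_w_properties}), $\alpha<\pi/2$ is equivalent to $L<\pi^2/2$, which proves the first claim.

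\emph{Divergence rates.} By (P1), $x_1(0)=-L/2$ is the midpoint and $x_3(0)=0$ lies on the $x_4$-axis. I will obtain matching lower and upper bounds of order $\alpha^2$ for both excursions.
\begin{itemize}
\item Lower bound for $i=1$: integrating $x_2$ over $(0,g]$ gives $x_1(g)-x_1(0)=g\alpha-g^2/2$. With $g=(1+\sqrt2)\alpha+O(\alpha^2)$ from~\eqref{eq:g_expansion}, this equals $-\alpha^2/2+O(\alpha^3)$.
\item Lower bound for $i=3$: the switching point $P$ has $x_3=1-\cos\alpha\sim\alpha^2/2$.
\item Upper bounds: $|x_2|\leqslant\alpha$ on an interval of length $T=O(\alpha)$ bounds the $x_1$-excursion by $O(\alpha^2)$. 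On the oscillator side, $x_3$ stays within $[\,-(r-1),\,1-\cos\alpha\,]$, and $r-1=O(\alpha^2)$ by~\eqref{eq:r-def}.
\end{itemize}
Next, relate $L$ to $\alpha$. Proposition~\ref{prop:scaling} gives $L=T^4/384\,(1+o(1))$, and $T=f(\alpha)=2(2+\sqrt2)\alpha+O(\alpha^2)$ by Lemma~\ref{lem:f_w_properties}, so $L\asymp\alpha^4$. Both excursion ratios are therefore $\asymp\alpha^{-2}\asymp L^{-1/2}$, which diverges.

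\emph{Threshold $\pi^2/36$.} This is the step I expect to be hardest, since it needs exact expressions rather than asymptotics. For $L<\pi^2/2$ the profile is symmetric about $\tau=0$, so the extremes of $x_1-x_1(0)$ lie at a few candidate points:
\begin{itemize}
\item the endpoints $\tau=\pm T/2$, at distance $L/2$;
\item the points $\tau=\pm(g-\alpha)$ where $x_2=0$.
\end{itemize}
Integrating $x_2$ from $0$ to the zero of $x_2$ gives $x_1-x_1(0)=(g-\alpha)^2/2$ in magnitude, on the side opposite the motion. The farthest distance from the start is then $L+(g-\alpha)^2/2$. The ratio condition thus reduces to a scalar inequality in $\alpha$ and $g$, which I will rewrite via the closure as an equation in $T$ alone using~\eqref{eq:Ts_branch}. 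I then expect to locate the crossing where it equals the threshold ratio, verify that this crossing corresponds to $L=\pi^2/36$ (presumably at a special value such as $T/4=\pi/6$ or similar), and conclude monotonicity on the small-$L$ side from~\eqref{eq:dL_dT} and Lemma~\ref{lem:sinc_gap}.

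The main risk is misidentifying which reference point the paper's ratio uses (midpoint $x_1(0)$ versus the start $-L$). I would first check both normalizations against the value $\pi^2/36$ and keep the one that produces it.
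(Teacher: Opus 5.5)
Your treatment of the sign reversal and of the $L^{-1/2}$ divergence follows the paper's route. The reversal holds iff $g>\alpha$, which Lemma~\ref{lem:g_props}, $g\leqslant\pi/2$ and monotonicity of $w$ settle. The excursions are of order $\alpha^{2}$ while $L\asymp\alpha^{4}$. Your two-sided bounds are more careful than the paper's. Three slips there do not affect the orders:
\begin{itemize}
\item At $\alpha=\pi/2$ the closure~\eqref{eq:closure_compact} reads $-\sin g=2\cos g-1$, not $\cos 2g=2\cos g-1$; the latter is the closure under the hypothesis $g=\alpha$. The only root in $[0,\pi/2]$ is still $\pi/2$.
\item $|x_{2}|\leqslant\alpha$ fails for small $\alpha$, since $|x_{2}(0)|=g-\alpha\approx\sqrt{2}\,\alpha$. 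Use $\max(\alpha,g-\alpha)$ instead.
\item For small $\alpha$ the synchronization arc passes the rightmost point of the circle about $(-1,0)$, where $x_{3}=r-1>1-\cos\alpha$. So the range of $x_3$ is $[-(r-1),\,r-1]$.
\end{itemize}

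The genuine gap is the threshold $\pi^{2}/36$. You leave it as an expectation, and the formula you would start from is wrong. The farthest distance from the start is not $L+(g-\alpha)^{2}/2$, because that adds an excursion measured from the midpoint to the full step. Using $L=\alpha^{2}+2\alpha g-g^{2}$, the turning points are $x_{1}(\alpha-g)=-L/2+(g-\alpha)^{2}/2=-L+\alpha^{2}$ and $x_{1}(g-\alpha)=-\alpha^{2}$. With start $-L$ and endpoint $0$, the farthest distance from the start is therefore $\max(L,\alpha^{2})$, which is what the paper uses. With your expression the ratio would exceed one throughout $L<\pi^{2}/2$, so no threshold could emerge.

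The missing step is the reduction $\alpha^{2}>w(\alpha)=\alpha^{2}+2\alpha g-g^{2}\iff g(g-2\alpha)>0\iff g>2\alpha$. The crossing is then exact. Putting $g=2\alpha$ into~\eqref{eq:closure_compact} gives $\cos3\alpha=2\cos2\alpha-1$, i.e.\ $(\cos\alpha-1)(4\cos^{2}\alpha-3)=0$. Its only root with $0<2\alpha\leqslant\pi/2$ is $\alpha=\pi/6$, $g=\pi/3$, so $T=\pi$ there, not $T/4=\pi/6$. By~\eqref{eq:g_expansion}, $g\approx(1+\sqrt{2})\alpha>2\alpha$ near $0$, and $g\leqslant\pi/2<2\alpha$ for $\alpha>\pi/4$. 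Continuity then gives $g>2\alpha$ exactly on $(0,\pi/6)$. Monotonicity of $w$ converts this to $L<w(\pi/6)=\pi^{2}/4-2\pi^{2}/9=\pi^{2}/36$. It is the uniqueness of this crossing, not \eqref{eq:dL_dT} or Lemma~\ref{lem:sinc_gap}, that fixes the sign below it.

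Nor is the reference point something to select by matching the answer. The statement says \emph{from its start}. Measuring from the midpoint instead gives $\max(L/2,\,\alpha^{2}-L/2)$, whose ratio exceeds one iff $g>(1+2/\sqrt{3})\alpha$, a different threshold.
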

\begin{proof}
	The first bang ramps $|x_{2}|$ to $\alpha$ and the second bang reduces it by $g$, so the velocity reverses there iff
	$g > \alpha$, which by Lemma~\ref{lem:g_props} holds exactly on
	$(0, \pi/2)$, i.e.\ $L < w(\pi/2) = \pi^{2}/2$
	(Fig.~\ref{fig:phase_portraits}). With $\alpha = w^{-1}(L)$,
	integrating gives peak deviation
	$\max(L, \alpha^{2})$ of $x_{1}$. The orbit radius gives
	$r(\alpha) - 1$ for $x_{3}$. Both vanish as $\alpha^{2}$ while
	$L$ vanishes as $\alpha^{4}$
	(Proposition~\ref{prop:scaling}), so both ratios diverge as
	$L^{-1/2}$. By~\eqref{eq:w_def},
	$\max(L, \alpha^{2})/L > 1 \iff \alpha^{2} > w(\alpha)
	\iff g > 2\alpha$, which holds on $(0, \pi/6)$, i.e. 
	\mbox{$L < w(\pi/6) = \pi^{2}/36$}.
\end{proof}


\begin{corollary}\label{cor:no_overshoot}
	$x_{1} - x_{3}$ decreases monotonically from $L$ to $0$. Hence
	$\max_{\tau}|x_{1} - x_{3}| = L$. In the two-mass realization the
	load coordinate is a positive multiple of $x_{1} - x_{3}$
	(Fig.~\ref{fig:unification}, Appendix~\ref{app:canonical}) and
	therefore also non-increasing: for any step size and mass ratio,
	the load approaches its target without overshoot, while the driven coordinate's relative deviation diverges as $L^{-1/2}$ for every mass ratio.
\end{corollary}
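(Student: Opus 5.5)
The plan is to differentiate the difference $y \coloneqq x_1 - x_3$ and show that its derivative $x_2 - x_4$ has a fixed sign along the optimal trajectory. By the sign symmetry of~\eqref{eq:u_star}, I take the p-path with $\bar x_1 = -L$, as in Section~\ref{sec:portraits}. The claim then reads: $x_3 - x_1$ decreases from $L$ to $0$, or equivalently $y$ increases from $-L$ to $0$. The endpoints follow from the boundary conditions of Problem~\ref{prob:TOC}, since $x_3$ vanishes at both ends.

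\emph{Step 1 (integral form of the derivative).} From the dynamics, $\dot x_2 = u$ and $\dot x_4 = -x_3 + u$. The input therefore cancels, and $\tfrac{d}{d\tau}(x_2 - x_4) = x_3$. Both velocities vanish at $\tau = -T/2$, so
\[
\dot y(\tau) = x_2(\tau) - x_4(\tau) = \int_{-T/2}^{\tau} x_3(\sigma)\,d\sigma .
\]
It thus suffices to show that $\int_{-T/2}^{\tau} x_3 \geqslant 0$ for every $\tau$.

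\emph{Step 2 (sign pattern of $x_3$).} This is where the type-1 structure from Theorem~\ref{thm:analytical_sol} enters. On the first bang the oscillator moves on the unit circle centred at $(1,0)$, so $x_3 = 1 - \cos(\tau + T/2) \geqslant 0$. On the synchronization arc it moves on the circle about $(-1,0)$ from $P$ to $Q$, and $Q$ is the \emph{first} $x_4$-axis crossing because $k^* = 1$. Since $P$ has $x_3 = 1 - \cos\alpha \geqslant 0$, the arc $\widehat{PQ}$ stays in $x_3 \geqslant 0$. Hence $x_3 \geqslant 0$ on $[-T/2, 0]$.

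For the second half I use the mirror symmetry (P1) about the $x_4$-axis together with the oddness (P2) of $u^*$. The map $\tau \mapsto -\tau$, $u \mapsto -u$ sends solutions to solutions with $(x_3, x_4)(-\tau) = (-x_3, x_4)(\tau)$ for this palindromic trajectory. Therefore $x_3$ is odd in $\tau$, and $x_3 \leqslant 0$ on $[0, T/2]$.

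\emph{Step 3 (conclusion).} By Steps 1 and 2, $\dot y$ is nondecreasing on $[-T/2, 0]$ starting from $0$, and nonincreasing on $[0, T/2]$. Its terminal value is $x_2(T/2) - x_4(T/2) = 0$. A function that rises from $0$ and then falls back to $0$ is nonnegative throughout, so $\dot y \geqslant 0$ and $y$ is monotone. Monotonicity together with the endpoint values gives $\max_\tau |x_1 - x_3| = L$.

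For the two-mass statement, I would invoke the transformation of Appendix~\ref{app:canonical}, which writes the load coordinate as a positive multiple of $x_1 - x_3$. Monotonicity then transfers directly, and this already excludes overshoot. The driven coordinate is $x_1$ plus a positive multiple of $x_3$. Its relative deviation therefore inherits the $L^{-1/2}$ divergence from Proposition~\ref{prop:detour}, because the $x_1$ and $x_3$ deviations are both of order $\alpha^2$ with the same sign at mid-maneuver, so they cannot cancel.

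The step I expect to be most delicate is the oddness of $x_3$ in Step 2, that is, pinning down exactly which reflection (P1) provides. My first attempt would be a direct verification: integrate the canonical control~\eqref{eq:u_star} backward from the origin and check that it reproduces the mirrored arcs.
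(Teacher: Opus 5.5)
Your proposal is correct and follows essentially the paper's proof. The paper works with $e = x_4 - x_2 = -\dot y$ and uses the same type-1 sign pattern of $x_3$ (unit circle on the boost, first $x_4$-axis crossing $Q$ on the synchronization arc) to get $e \leqslant 0$ on $[-T/2,0]$. It then uses oddness of $x_3$ to make $e$ even, whereas your Step~3 closes with the terminal condition $\dot y(T/2)=0$, which is equally valid.

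The oddness you flag as delicate is immediate: (P2) together with $x_3(0)=0$ (or the rest-to-rest boundary conditions) gives it by uniqueness of ODE solutions. Your extra argument for the driven coordinate, which the paper's proof does not spell out, is sound once the no-cancellation is placed inside the first half. For small $L$, take the instant of the synchronization interval where $x_2=0$ and $x_1+L=\alpha^2$, while $x_3 \geqslant 0$; at $\tau=0$ itself $x_3=0$, so ``mid-maneuver'' is the wrong instant.
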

\begin{proof}
	By the symmetry $(\mathbf{x}, u) \mapsto (-\mathbf{x}, -u)$ we may
	take $\bar{x}_{1} = -L$, so that $u^{*} = +1$ on the first boost,
	matching the geometry of Section~\ref{sec:geometry}. Put
	$d := x_{1} - x_{3}$ and $e := x_{4} - x_{2}$, so that
	$\dot{d} = -e$, $\dot{e} = -x_{3}$, and $e(\pm T/2) = 0$. On the
	first boost the oscillator runs from the origin on the unit circle of
	center $(1,0)$, so
	$x_{3} = 1 - \cos\theta \geqslant 0$. On the following
	synchronization interval it sweeps clockwise to the \emph{first}
	crossing of the $x_{4}$-axis, so $x_{3} > 0$ until $\tau = 0$.
	Hence $e$ decreases from $e(-T/2) = 0$ and is nonpositive on
	$[-T/2, 0]$. By~(P2) $x_{3}$ is odd about $\tau = 0$, so
	$e = -\int_{-T/2}^{\tau} x_{3}$ is even and $e \leqslant 0$
	throughout. Thus $\dot{d} = -e \geqslant 0$, and $d$ runs
	monotonically from $d(-T/2) = -L$ to $d(T/2) = 0$.
\end{proof}

\section{Design Guidelines}
\label{sec:design}


This section translates the results of Sections~\ref{sec:analytical_solution}--\ref{sec:flexibility_penalty} into the physical parameters of the three models (Fig.~\ref{fig:unification}, App.~\ref{app:canonical}).
Throughout, uppercase $T$ denotes a
minimum time and lowercase $t$ its physical counterpart, related
by $t = T/\omega$.
%
%
%
The minimal maneuver time \emph{factorizes} into the rigid-body
minimum and a dimensionless factor,
\begin{equation*}
	t^{*} = t_{r}R(\nu),\quad
	t_{r} \coloneqq \frac{T_{r}}{\omega} = 2\sqrt{\frac{M\delta}{b_{1}}},\;
	\nu \coloneqq \frac{\omega t_{r}}{4\pi} = 
	\sqrt{\frac{L}{L^{\text{nat}}_{1}}},
\end{equation*}
where $L = M\omega^{2}\delta/b_{1}$ is the normalized step size,
$\omega = \sqrt{k/\mu}$ the natural frequency,
$R \coloneqq t^{*}/t_{r}$ the \emph{time ratio}, and
$L^{\text{nat}}_{1} = (2\pi)^{2}$ the step size at the first
natural motion, and $\nu$ the dimensionless \textit{cycle count}. The five physical parameters thus enter only through $t_{r}$ and $\nu$.

Here $M$ is the total inertia, $\mu$ the reduced inertia, $k$ the modal stiffness, $\delta$ the commanded displacement, and $b_{1}$ the actuator authority. Table~\ref{tab:design} instantiates them per domain and Appendix~\ref{app:canonical} derives the reduction.

The rigid-body baseline $t_{r}$ does not depend on $\omega$, hence not on stiffness $k$. The flexible-mode input gain $b_{2}$ (one-bending-mode model only) does not affect $t^*$ since the rest-to-rest conditions are homogeneous in the oscillator state, so $b_{2}$ rescales the modal deflection and leaves every time unchanged.
\begin{table}
	\centering
	\footnotesize
	\caption{Native parameters.	
	}
	\label{tab:design}
	\setlength{\tabcolsep}{3pt}
	\begin{tabular*}{\linewidth}
		{@{\extracolsep{\fill}}lccc@{}}
		\toprule
		& Two-mass-spring & Flexible struct. & Overhead crane \\
		\midrule
		$\delta$
		& displacement & angle & displacement \\
		$M$
		& $m_m + m_l$ & $J_h + J_t$ & $m_t + m_p$ \\
		$\mu$
		& $m_m m_l / M$ & $J_h J_t / M$ & $m_t m_p / M$ \\
		$k$
		& $k$ & $k_b$ & $m_p g/\ell$ \\
		$b_1$
		& $F_{\max}$ & $\tau_{\max}$ & $F_{\max}$ \\
		\bottomrule
	\end{tabular*}
\end{table}

\subsection{Cycle Count and Performance Envelope}
\label{sec:cycle_count}
\label{sec:envelope_phys}
By definition, $\nu$ is the number of oscillator cycles per bang of the rigid-body maneuver over the same step. It depends on the step size and system parameters. 
At $\nu = n$ the design coincides with the $n$-th natural motion
$L^{\text{nat}}_{n} = (2n\pi)^{2}$ 	(Corollary~\ref{cor:natural_anti}), where the oscillator completes $n$
cycles per bang, no synchronization is required, and
$R = 1$.
Between integers, a synchronization phase ($t_{s} > 0$) follows the initial boost phase (Fig.~\ref{fig:admissible_control}). The performance
ratio is bracketed by
\begin{equation}\label{eq:envelope_R}
	1 \;\leqslant\; R(\nu)
	\;\leqslant\; \sqrt{1 + 1/(8\nu^{2})}.
\end{equation}
To evaluate $R$, solve the time-displacement law~\eqref{eq:kepler}
for $T$ within the envelope~\eqref{eq:envelope} with
$L = 4\pi^{2}\nu^{2}$, then $R = t^*/t_r=T/T_{r} = T/(4\pi\nu)$.
Both bounds are sharp, the lower at natural motions $\nu = n$, the upper at the anti-resonances
$\nu^{\text{anti}}_{n} = \tfrac{1}{2}\sqrt{(2n+1)^{2} - 1/2}$,
$n\in\mathbb{N}_0$ (Corollary~\ref{cor:natural_anti}), where
$t_{s} = \pi/\omega$. Synchronization is thus capped at half the
oscillator period, independent of $\delta$
(Lemma~\ref{lem:sync_bound}).

\subsection{Stiffness Sensitivity and the Flexibility Paradox}
\label{sec:sensitivity}

Stiffening does not monotonically reduce $t^*$.
\begin{lemma}\label{lem:closure-derivative}
	For $L \neq L_{n}^{\mathrm{nat}}$ ($n \in \mathbb{N}_{0}$), the
	synchronization time is a differentiable function of the step size, with
	\begin{equation}\label{eq:Ts-prime}
		T_{s}' \;:=\; \frac{dT_{s}}{dL}
		\;=\; \frac{\sin(T/2)}{T\sin(T_{s}/2) - T_{s}\sin(T/2)},
	\end{equation}
	where $T$ and $T_{s}$ are the optimal times at step size $L$.
\end{lemma}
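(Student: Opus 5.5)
We obtain \eqref{eq:Ts-prime} by the chain rule $dT_{s}/dL = (dT_{s}/dT)\,(dT/dL)$, reusing the two derivatives computed in the proof of Lemma~\ref{lem:f_w_properties}. By Theorem~\ref{thm:analytical_sol} the optimal pair lies on the type-1 family, so $T = h(L)$, and $T_{s}$ is the branch~\eqref{eq:Ts_branch} evaluated at $T$ (Theorem~\ref{thm:pythagorean}). The excluded step sizes $L = L_{n}^{\mathrm{nat}} = (2n\pi)^{2}$ are exactly those with $T = 4n\pi$ (Corollary~\ref{cor:natural_anti}), equivalently $\alpha^{*} = 2n\pi$. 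Since $h$ is a strictly increasing bijection, $L \neq L_{n}^{\mathrm{nat}}$ forces $T \in \bigl(4m\pi,\,4(m+1)\pi\bigr)$ for some $m$. On this open interval the proof of Lemma~\ref{lem:f_w_properties} established three facts:
\begin{itemize}
\item $\sin(T/4)\neq 0$;
\item $0 < T_{s} \leqslant \pi$ and $T > T_{s}$;
\item $T_{s}$ is smooth in $T$, with $dT_{s}/dT$ given by~\eqref{eq:dTs_dT} and $dL/dT$ given by~\eqref{eq:dL_dT}.
\end{itemize}

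The first step is to justify differentiability of $T$ as a function of $L$. The map $T \mapsto L = T^{2}/4 - T_{s}(T)^{2}/2$ is $C^{1}$ on the open interval. By Lemma~\ref{lem:sinc_gap}, its derivative~\eqref{eq:dL_dT} is strictly positive: the numerator $T\sin(T_{s}/2) - T_{s}\sin(T/2)$ is positive, and the denominator $2\sin(T_{s}/2)$ is positive since $T_{s}/2 \in (0,\pi/2]$. The inverse function theorem then makes $h$ differentiable at $L$, with
\[
\frac{dT}{dL} = \frac{2\sin(T_{s}/2)}{T\sin(T_{s}/2) - T_{s}\sin(T/2)}.
\]

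Next, composing with~\eqref{eq:dTs_dT} gives
\[
T_{s}' = \frac{\sin(T/2)}{2\sin(T_{s}/2)}\cdot\frac{2\sin(T_{s}/2)}{T\sin(T_{s}/2) - T_{s}\sin(T/2)} .
\]
The factor $2\sin(T_{s}/2)$ cancels, which yields~\eqref{eq:Ts-prime}.

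The main obstacle is only bookkeeping: confirming that $T_{s} > 0$ away from natural motions, so that the cancelled factor and the denominator of~\eqref{eq:dTs_dT} are nonzero. This follows from Lemma~\ref{lem:sync_bound}, since $g(\alpha^{*}) = 0$ only when $\alpha^{*} = 2n\pi$. The anti-resonances, where $T_{s} = \pi$, need no special treatment: there $\sin(T_{s}/2) = 1$, and~\eqref{eq:Ts_branch} remains smooth because $\sin(T/4) \neq 0$.
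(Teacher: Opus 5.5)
Your proposal is correct and follows essentially the same route as the paper. Both restrict $T$ to an open interval $(4m\pi,4(m+1)\pi)$, take \eqref{eq:dTs_dT}--\eqref{eq:dL_dT} with $dL/dT>0$ from Lemma~\ref{lem:sinc_gap}, apply the inverse function theorem, and combine by the chain rule; your extra bookkeeping (excluded $L$ correspond to $T=4n\pi$ via bijectivity of $h$, and smoothness holds at the anti-resonances) only makes explicit what the paper leaves implicit.
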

\begin{proof}
	For $L \neq L_{n}^{\mathrm{nat}}$, $T$ lies in an open interval
	$(4n\pi,\, 4(n+1)\pi)$, where the proof of
	Lemma~\ref{lem:f_w_properties} establishes $dT_{s}/dT$ and
	$dL/dT > 0$ in~\eqref{eq:dTs_dT}--\eqref{eq:dL_dT}. By the inverse function
	theorem, $T$ and hence $T_{s}$ are differentiable in $L$, and the
	chain rule gives $T_{s}' = (dT_{s}/dT)\big/(dL/dT)$, which
	is~\eqref{eq:Ts-prime}, with denominator
	$T\sin(T_{s}/2) - T_{s}\sin(T/2) > 0$ by
	Lemma~\ref{lem:sinc_gap}; $0 < T_{s} \leqslant \pi$ by
	Lemma~\ref{lem:sync_bound}, and $T - T_{s} = 2\alpha^{*} > 0$.
\end{proof}
\begin{theorem}\label{thm:sign_reversal}
	For $\nu \neq n$, the sensitivity of the minimal maneuver time to the natural frequency is
	\begin{equation}\label{eq:sens_omega}
		\frac{\partial t^{*}}{\partial \omega}
		=
		\frac{2T_{s}}{T}\cdot
		\frac{2L T_{s}' - T_{s}}{\omega^{2}},
		\qquad T_{s}' \coloneqq \frac{dT_{s}}{dL}.
	\end{equation}
	At each natural motion $\nu = n \in \mathbb{N}_{\geqslant 1}$,
	$t^{*}$ is differentiable in $\omega$ with
	$\partial t^{*}/\partial \omega = 0$, and the derivative strictly
	reverses sign: $< 0$ as $\nu \nearrow n$, $> 0$ as
	$\nu \searrow n$. 
	A fractional miss $\eta \coloneqq (\nu-n)/n$ costs	$R - 1 = \eta^{2}/2 + O(\eta^{3})$.
%
\end{theorem}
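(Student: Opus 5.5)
The plan is to use the factorization $t^{*} = T/\omega$, where $T = h(L)$ and the normalized step size $L = M\omega^{2}\delta/b_{1}$ depends on $\omega$ through $\partial L/\partial\omega = 2L/\omega$. All other native parameters are held fixed. Away from the natural motions, Lemma~\ref{lem:closure-derivative} together with the Pythagorean identity~\eqref{eq:pythag} makes $T$ differentiable in $L$. Differentiating $T^{2} = 4L + 2T_{s}^{2}$ gives
\[
T' = \frac{2 + 2T_{s}T_{s}'}{T}.
\]
The chain rule then gives
\[
\frac{\partial t^{*}}{\partial\omega} = -\frac{T}{\omega^{2}} + \frac{2L\,T'}{\omega^{2}} = \frac{-T^{2} + 4L + 4L\,T_{s}T_{s}'}{T\omega^{2}}.
\]
Replacing $-T^{2}+4L$ by $-2T_{s}^{2}$ (again by~\eqref{eq:pythag}) and factoring out $2T_{s}$ yields~\eqref{eq:sens_omega}. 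This part is routine.

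Near a natural motion I would avoid~\eqref{eq:Ts-prime}, whose numerator and denominator both degenerate there. Instead I would parametrize everything by the maneuver time $T = T_{0} + \delta$ with $T_{0} = 4n\pi$, using the explicit branch~\eqref{eq:Ts_branch}. Since $\sin(T/4) = \pm\sin(\delta/4)$, we get $T_{s} = |\delta|/\sqrt{2} + O(\delta^{3})$, hence $2T_{s}^{2} = \delta^{2} + O(\delta^{4})$. This gives $T_{r}^{2} = T^{2} - \delta^{2} + O(\delta^{4})$, so $T_{r} = T_{0} + \delta + O(\delta^{2})$ and
\[
R^{2} = \frac{T^{2}}{T^{2} - \delta^{2} + O(\delta^{4})} = 1 + \frac{\delta^{2}}{T_{0}^{2}} + O(\delta^{3}).
\]
Because $\nu = T_{r}/(4\pi)$ and $n = T_{0}/(4\pi)$, the fractional miss is $\eta = \delta/T_{0} + O(\delta^{2})$. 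Hence $R - 1 = \eta^{2}/2 + O(\eta^{3})$, which is the last claim. Since $t^{*} = t_{r}R(\nu)$ with $t_{r}$ independent of $\omega$ and $\nu$ proportional to $\omega$, the quadratic expansion of $R$ about $\nu = n$ shows that $t^{*}$ is differentiable in $\omega$ at the natural motion, with zero derivative.

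For the strict sign reversal I would not rely only on the local minimum of $R$, since that does not by itself give a definite sign of the derivative on each side. Instead I would evaluate the sign of the factor $2LT_{s}' - T_{s}$ in~\eqref{eq:sens_omega}; the prefactor $2T_{s}/(T\omega^{2})$ is positive off the natural motions. From the expansion above, $L - L_{0} = T_{0}\delta/2 + O(\delta^{2})$ with $L_{0} = L_{n}^{\mathrm{nat}}$. Hence $T_{s} = \sqrt{2}\,|L - L_{0}|/T_{0}\,(1 + o(1))$, and I would also verify $T_{s}' = \operatorname{sgn}(L - L_{0})\sqrt{2}/T_{0}\,(1 + o(1))$, either from~\eqref{eq:Ts-prime} expanded in $\delta$ or from the ratio~\eqref{eq:dTs_dT} over~\eqref{eq:dL_dT}.

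With these two expansions, $2LT_{s}' \to \pm\sqrt{2}\,L_{0}\cdot 2/T_{0} = \pm\sqrt{2}\,T_{0}/2$, which is bounded away from zero, while $T_{s} \to 0$. So $\operatorname{sgn}(\partial t^{*}/\partial\omega) = \operatorname{sgn}(L - L_{0})$ near the natural motion. Since $L$, and hence $\nu$, increases with $\omega$, the case $\nu\nearrow n$ is $L < L_{0}$, giving a negative derivative, and $\nu\searrow n$ gives a positive one.

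The main obstacle is the expansion of $T_{s}'$ near the singular point, where~\eqref{eq:Ts-prime} is a $0/0$-type quotient. I would handle it by expanding numerator and denominator to first order in $\delta$: $\sin(T/2) \approx \delta/2$, and the denominator $\approx T_{0}\,T_{s}/2 \approx T_{0}|\delta|/(2\sqrt{2})$. The quotient is then $\sqrt{2}\,\operatorname{sgn}(\delta)/T_{0}$, and checking that the $o(1)$ corrections do not spoil the sign completes the argument.
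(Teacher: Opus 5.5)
Your proposal is correct. For the sensitivity formula and for the expansion $R-1=\eta^{2}/2+O(\eta^{3})$ it matches the paper in substance, with only cosmetic differences:
\begin{itemize}
\item The paper first computes $\partial t_{s}/\partial\omega=(2LT_{s}'-T_{s})/\omega^{2}$ and then differentiates $(t^{*})^{2}=t_{r}^{2}+2t_{s}^{2}$ with $t_{r}$ fixed. You instead differentiate $t^{*}=T/\omega$ through $T'(L)$ obtained from~\eqref{eq:pythag}.
\item The paper expands the closure $\cos(T_{s}/2)=\cos^{2}(\Delta/4)$, whereas you expand the branch~\eqref{eq:Ts_branch}.
\end{itemize}
These are the same computation.

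The genuine difference is the strict sign reversal. The paper writes $t^{*}=t_{r}+t_{s}^{2}/t_{r}+O(t_{s}^{4})$, notes that this is quadratic in $\nu-n$ with a strict minimum, and concludes directly that the derivative vanishes and reverses sign. As you observe, a strict minimum with an $O(\cdot)$ remainder does not by itself fix the sign of the derivative on each side. The paper's inference is valid only through regularity it leaves implicit. For example, $T_{s}^{2}=16\arcsin^{2}\bigl(\sin(\Delta/4)/\sqrt{2}\bigr)$ is an even analytic function of $\Delta$. Hence $L$ is analytic in $T$ with $dL/dT=T_{0}/2>0$ at $T_{0}=4n\pi$, which makes $t^{*}$ analytic in $\omega$ there.

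You instead read the sign off the already-established formula~\eqref{eq:sens_omega}. The prefactor is positive, $T_{s}\to 0$, and $T_{s}'\to\pm\sqrt{2}/T_{0}$ as one-sided limits, so $2LT_{s}'-T_{s}$ has the sign of $L-L_{n}^{\mathrm{nat}}$. Your resolution of the $0/0$ quotient~\eqref{eq:Ts-prime} is sound. The numerator is $\tfrac{\delta}{2}\,(1+O(\delta^{2}))$ and the denominator is $T_{0}\lvert\delta\rvert/(2\sqrt{2})\,(1+O(\delta^{2}))$, since the $\delta\lvert\delta\rvert$ terms cancel. So the corrections cannot affect the sign.

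The paper's version is shorter and carries the geometric message that the linear kink of $t_{s}$ is squared away in $t^{*}$. Yours is self-contained and proves exactly the one-sided strict signs the statement asserts.

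One cosmetic point: rename your $\delta=T-T_{0}$ (the paper's $\Delta$), since $\delta$ already denotes the physical displacement in $L=M\omega^{2}\delta/b_{1}$.
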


\begin{proof}
	With $T_{r} = 2\sqrt L$, $t_{r} = 2\sqrt{L}/\omega = 2\sqrt{M\delta/b_{1}}$ is $\omega$-independent.
	By Lemma~\ref{lem:closure-derivative}, $T_{s}$ is differentiable
	in $L$ off natural motions, with $T_{s}' = dT_{s}/dL$
	from~\eqref{eq:Ts-prime}. Since $t_{s} = T_{s}/\omega$ depends on
	$\omega$ through both $L$ and $1/\omega$, $\partial t_{s}/\partial
	\omega = T_{s}'(\partial L/\partial \omega)/\omega -
	T_{s}/\omega^{2} = (2L\,T_{s}' - T_{s})/\omega^{2}$, using
	$\partial L/\partial \omega = 2L/\omega$. Differentiating the 
	Pythagorean identity \eqref{eq:pythag}, using $\partial t_{r}/\partial \omega = 0$ and $t_{s}/t^{*} = T_{s}/T$, gives~\eqref{eq:sens_omega}.
	
	Formula~\eqref{eq:sens_omega} holds for $\nu \neq n$, where
	$T_{s}'$ exists (Lemma~\ref{lem:closure-derivative}). The integers
	$\nu = n$ are the natural motions and need separate treatment.
	There $t_{s} = 0$, so $t^{*} = t_{r}$ and  $T = T_{r} = 4n\pi$ by \eqref{eq:pythag}. So only the local deviation matters. 
	Write $\Delta \coloneqq T - 4n\pi$. Since
	$\cos(T/4) = (-1)^{n}\cos(\Delta/4)$, the
	closure~\eqref{eq:closure-cos} reads
	$\cos(T_{s}/2) = \cos^{2}(\Delta/4)$, free of~$n$ and even
	in~$\Delta$. Expanding both sides,
	$1 - T_{s}^{2}/8 + O(T_{s}^{4})
	= 1 - \Delta^{2}/16 + O(\Delta^{4})$, so
	$T_{s}^{2} = \Delta^{2}/2 + O(\Delta^{4})$ and, since
	$T_{s} \geqslant 0$ (Lemma~\ref{lem:sync_bound}),
	$T_{s} = |\Delta|/\sqrt{2} + O(|\Delta|^{3})$.
	%
	%
	The
	cost~\eqref{eq:pythag} gives $T = T_{r} + O(T_{s}^{2})$, and
	$T_{r} = 4\pi\nu$, so $\Delta = 4\pi(\nu - n) + O((\nu-n)^{2})$
	for $n \geqslant 1$. Hence
	$T_{s} = 2\sqrt{2}\,\pi\,|\nu - n| + O((\nu-n)^{2})$, and
	$t_{s} = T_{s}/\omega$ vanishes linearly in $|\nu - n|$.
	Since $t^{*}$ depends on $t_{s}$ only through $t_{s}^{2}$, the
	kink is squared away: $t^{*} = t_{r} + t_{s}^{2}/t_{r} +
	O(t_{s}^{4})$, quadratic in $\nu - n$ by the preceding paragraph,
	with a strict minimum at $\nu = n$, so the derivative vanishes and reverses sign strictly. With $t_{r} = 4\pi\nu/\omega$,
	$R(\nu) - 1 = t_{s}^{2}/t_{r}^{2} = (\nu-n)^{2}/(2\nu^{2}) +
	O((\nu-n)^{3})$, and substituting $\nu = n(1+\eta)$ gives $R - 1 = \eta^{2}/2 + O(\eta^{3})$.
\end{proof}


Since $t_{r}$ is $\omega$-independent, $R$ and $t^{*}$ share their
critical points. This resolves the flexibility paradox: by
\eqref{eq:envelope_R} \mbox{$R \geqslant 1$}, so adding flexibility
to a rigid body never shortens a maneuver, and costs nothing
at the natural motions ($R = 1$). Yet $R$ is not monotone, so just
above an integer $\nu$ softening does shorten a maneuver.




\subsection{Performance Landscape}
\label{sec:landscape}


Fig.~\ref{fig:performance_landscape} plots the time overhead
$R-1$ against cycle count~$\nu$, visualizing
Theorem~\ref{thm:sign_reversal}. The slope reverses at
each natural motion, and once per basin near but not at the
anti-resonance.

\begin{figure*}
	\footnotesize
	\centering
	\def\svgwidth{0.83\linewidth}
	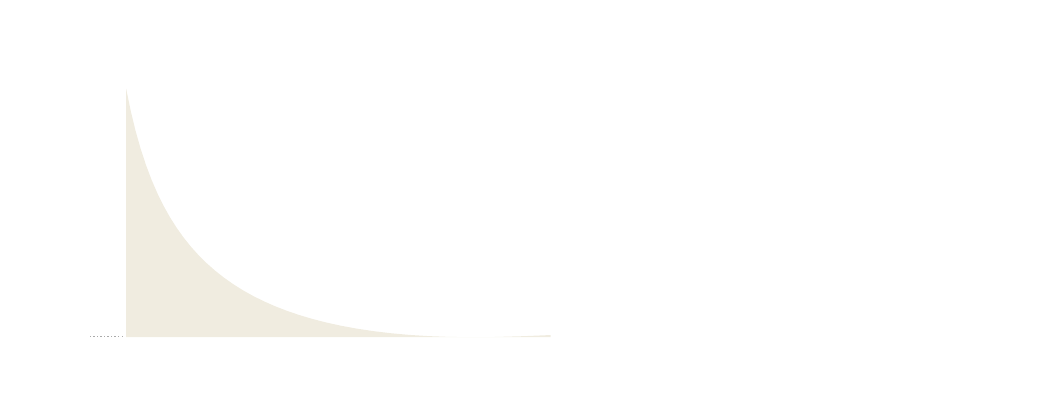
	\caption{Overhead $R(\nu) - 1$ versus cycle count
		$\nu = \omega t_r/(4\pi)$, with
		envelope~\eqref{eq:envelope_R} (dashed). Open circles: natural
		motions ($R = 1$). Red: anti-resonances $\nu^{\textup{anti}}_n$,
		where the envelope is attained.}
	\label{fig:performance_landscape}
\end{figure*}

\emph{The sub-natural-motion catastrophe ($\nu < 1$):} as
$\nu \to 0$ the time ratio $R$ grows as $\nu^{-1/2}$
(Prop.~\ref{prop:scaling}) and the excursion ratios of
Prop.~\ref{prop:detour} as $\nu^{-1}$, both without bound. Two
thresholds mark the regime: below the first anti-resonance
$\nu^{\textup{anti}}_0 = 1/(2\sqrt{2})$ the rigid-body velocity
reverses within each synchronization interval. Below $\nu = 1/12$
the deviation of the rigid coordinate from its start exceeds the
commanded displacement.

\emph{Flexibility is nearly free ($\nu \geqslant 1$):} the
envelope~\eqref{eq:envelope} bounds the overhead by
$\sqrt{1 + 1/(8\nu^{2})} - 1 \approx 1/(16\nu^{2})$, so
$R - 1 \leqslant \epsilon$ once $\nu \geqslant 1/(4\sqrt{\epsilon})$;
one percent needs $\nu \geqslant 2.5$.



\subsection{Design Rule}
\label{sec:design_rule}

\textit{Get to $\nu\geq 1$ at all costs:} the vibration period must not exceed the rigid-body bang, $2\pi/\omega \leqslant t_{r}/2$. Below $\nu = 1$ the penalty is unbounded and stiffness dominates the maneuver time.
%
%
Above it the penalty peaks at $3.01\%$. This supports the numerical finding of~\cite{marshall2023} that no universal heuristic exists for the ratio $T/T_{n}$ of maneuver time to natural period $T_{n} = 2\pi/\omega$. For the one-mode model the ratio is exact, $T/T_{n} = 2\nu R(\nu)$, and two periods ($\nu = 1$) already bring flexibility within $3.01\%$, while the standard heuristic requires $T/T_{n} \geqslant 10$, with 100 common in practice.
Targeting natural motions is forgiving: with $\omega_{n} = 4\pi n/t_{r}$ and $\omega = \rho\,\omega_{n}$, the penalty is
$R - 1 \approx (\rho-1)^{2}/2$, below $0.55\%$ for
$|\rho - 1| \leqslant 0.1$ and any $n$
(Theorem~\ref{thm:sign_reversal}). 
So $t^{*} \approx t_{r}$, and the maneuver time is dominated by inertia, displacement and actuator
authority, not by stiffness. Ten percent more actuator authority is worth $4.6\%$, more than perfect frequency matching can ever buy.

\subsection{Worked Example: Cable-Length Selection}
\label{sec:worked_example}


A post-Panamax STS container crane transports a payload of mass
$m_{p} = 40$\,t (container plus spreader) between bays separated
by $\delta = 25$\,m, on a trolley of mass $m_{t} = 40$\,t with
force budget $F_{\max} = 80$\,kN. Cable length $\ell$ is the only
design lever: the cable is the spring (Table~\ref{tab:design}). The rigid-body minimum time (the $\ell \to 0$ limit, trolley and payload moving as one mass) is $t_{r} = 2\sqrt{(m_{t}+m_{p})\delta/F_{\max}} = 10$\,s.
Tuning $\ell = 12.42$\,m places the system at the first natural motion
($R = 1$). The maneuver completes in
$t^{*} = t_{r} = 10.00$\,s. Shortening to $\ell = 5.85$\,m to
``go faster'' more than doubles the stiffness but lands at the
first anti-resonance ($\nu = \sqrt{8.5}/2$), where the
envelope~\eqref{eq:envelope_R} is sharp ($R = \sqrt{18/17}$), giving
$t^{*} \approx 10.29$\,s. The surprise is the sign, not the magnitude. 
	\section{Conclusions}
\label{sec:conclusions}
Time-optimal rest-to-rest control of a double integrator coupled to a
harmonic oscillator had resisted closed-form solution because the
switching times satisfy a transcendental boundary-value problem. This model underlies three classical benchmarks~(Fig.~\ref{fig:unification}). Recasting it in
the phase plane, where a single midpoint closure links the maneuver
and synchronization times, reduces the entire synthesis to inverting
one strictly increasing scalar function
(Theorem~\ref{thm:analytical_sol}).

The optimal time then splits into the rigid-body minimum and a bounded
synchronization cost (Theorem~\ref{thm:pythagorean}), which vanishes at
a discrete family of natural motions and saturates at the
anti-resonances (Corollary~\ref{cor:natural_anti}). For small steps the temporal penalty is unbounded, the minimum time approaching the quadruple integrator's exact law (Proposition~\ref{prop:scaling}), so one elastic mode costs as much as two additional integrators. The spatial penalty mirrors it, the trajectory's excursion growing without bound relative to the step, yet the load itself reaches its target without overshoot (Corollary~\ref{cor:no_overshoot}).

Sensitivity to stiffness is non-monotone (Theorem~\ref{thm:sign_reversal}): adding flexibility never speeds a maneuver, yet softening an already flexible structure sometimes does.
Stiffness is thus a parameter to tune rather than maximize, and the natural motions, where flexibility costs nothing and stiffness errors do not hurt to first order, are the design target.

The tractability rests on there being a single elastic mode, hence a
single closure. With several modes each contributes its own closure, and the optimal control gains switches accordingly; whether the problem still
collapses to a low-dimensional inversion, rather than the full
transcendental system, remains open.

	\begin{ack}                               
		The author thanks Prof.\ Alessandro\ De Luca for introducing him to time-optimal control of flexible joints.
	\end{ack}
	
	\appendix

\appendix

\section{System Dynamics and Transformations}
\label{app:canonical}

\renewcommand{\arraystretch}{1.1}
Each application-domain system $\dot{\mathbf{y}} =
\mathbf{A}\mathbf{y} + \mathbf{b}\,v$ with
$|v(t)| \leqslant 1$ maps to the canonical
form~\eqref{eq:x_dyn} via $\mathbf{y} =
\mathbf{T}\mathbf{x}$ and time scaling
$\tau = \omega t$, which yield
$
\dot{\mathbf{x}}\big|_{\tau}
= \tfrac{1}{\omega}\, \mathbf{T}^{-1}\mathbf{A}\mathbf{T}\,
\mathbf{x}
+ \tfrac{1}{\omega}\, \mathbf{T}^{-1}\mathbf{b}\, v,
$
matching~\eqref{eq:x_dyn} with $u(\tau) = v(t)$. The system-specific $\mathbf{A}$, $\mathbf{b}$, and $\mathbf{T}$ follow.
A physical step size $\delta$ maps to $\mathbf{x}_{0} = (\pm L, 0, 0, 0)^{\top}$ with $L = M\omega^2\delta/b_1$.

\emph{Two-mass-spring system and overhead
	crane.}
With masses $m_{m}, m_{l} > 0$, stiffness $k > 0$,
and input gain $b_{1}$,
\begin{equation}\label{eq:two-mass-spring_sys}
	\mathbf{A} =
	\begin{bNiceMatrix}[columns-width=0.8em,margin=0.4em]
		\mathbf{0} & \mathbf{I}_{2} \\
		\mathbf{K} & \mathbf{0}
	\end{bNiceMatrix},
	\quad
	\mathbf{K} =
	\begin{bNiceMatrix}[columns-width=0.8em,margin=0.4em]
		-k/m_{m} & k/m_{m} \\
		k/m_{l} & -k/m_{l}
	\end{bNiceMatrix},
\end{equation}
$\mathbf{b} = (0, 0, b_{1}/m_{m}, 0)^{\top}$, with
$\mathbf{y} = (q_{m}, q_{l}, \dot{q}_{m},
\dot{q}_{l})^{\top}$. With $M \coloneqq m_{m} +
m_{l}$ and $\omega \coloneqq \sqrt{kM/(m_{m}
	m_{l})}$,
\begin{equation*}
	\mathbf{T} = \frac{b_{1}}{M\omega^{2}}\,
	\begin{bNiceMatrix}[columns-width=0.8em,margin=0.4em]
		1 & 0 & m_l/m_m & 0 \\
		1 & 0 & -1 & 0 \\
		0 & \omega & 0 & \omega m_l/m_m \\
		0 & \omega & 0 & -\omega
	\end{bNiceMatrix}.
\end{equation*}
In the small-angle limit, the overhead
crane (trolley mass $m_t$, payload $m_p$, cable
length $\ell$) reduces to this system under
$m_m \to m_t$, $m_l \to m_p$, $k \to k_{\text{eff}}
\coloneqq m_p g/\ell$.

\emph{One-bending-mode model~\cite{ben-asher1987}.}
With hub inertia $J_h$, tip inertia $J_t$, total inertia $M = J_h + J_t$,
reduced inertia $\mu = J_h J_t/M$, bending stiffness $k_b$,
and natural frequency $\omega =\!\sqrt{k_b/\mu}$,
\begin{equation}\label{eq:one-bending-mode_sys}
	\mathbf{A} =
	\begin{bNiceMatrix}[columns-width=0.4em,margin=0.0em]
		\mathbf{A}_{r} & \mathbf{0} \\ \mathbf{0} & \mathbf{A}_{f}
	\end{bNiceMatrix},
	\quad
	\mathbf{A}_{r} =
	\begin{bNiceMatrix}[columns-width=0.4em,margin=0.05em]
		0 & 1 \\ 0 & 0
	\end{bNiceMatrix},
	\;
	\mathbf{A}_{f} =
	\begin{bNiceMatrix}[columns-width=0.4em,margin=0.05em]
		0 & 1 \\ -\omega^{2} & 0
	\end{bNiceMatrix},
\end{equation}
$\mathbf{b} = (0,\, b_{1}/M,\, 0,\, b_{2})^{\top}$ with
$\mathbf{y} = (y_{r}, \dot{y}_{r}, y_{f}, \dot{y}_{f})^{\top}$
stacking rigid-body and flexible coordinates, and
$\mathbf{T} = \omega^{-2}\operatorname{diag}\!\big(b_{1}/M,\; b_{1}\omega/M,\; b_{2},\; b_{2}\omega\big)$. $b_2$ is the flexible-mode input gain.

	\bibliographystyle{plain}
	
	\bibliography{MyLibrary-clean}

\end{document}